\def\R{\mathbb R}
\def\N{\mathbb N}
\def\C{\mathbb C}
\renewcommand{\d}{{\mathrm d}}
\renewcommand{\i}{{\mathrm i}}
\def\e{\mathrm e}
\def\P{\mathcal P}
\newtheorem{lemma}{Lemma}[section]
\newtheorem{proposition}[lemma]{Proposition}
\newtheorem{theorem}[lemma]{Theorem}
\newtheorem{corollary}[lemma]{Corollary}
\newtheorem{observation}[lemma]{Observation}
\theoremstyle{definition}
\theoremstyle{remark}
\newtheorem{remark}[lemma]{Remark}
\begin{document}

\title{Quantum graph as a quantum spectral filter}
\author{Ond\v{r}ej Turek}
\email[]{ondrej.turek@kochi-tech.ac.jp}
\homepage[]{\tt \\ http://researchmap.jp/turek/}
%
\author{Taksu Cheon}
\email[]{taksu.cheon@kochi-tech.ac.jp}
\homepage[]{\tt \\ http://researchmap.jp/T_Zen/}
\affiliation{Laboratory of Physics, Kochi University of Technology,
 Tosa Yamada, Kochi 782-8502, Japan}

\date{\today}

\begin{abstract}
We study the transmission of a quantum particle along a straight input--output line to which a graph $\Gamma$ is attached at a point. In the point of contact we impose a singularity represented by a certain properly chosen scale-invariant coupling with a coupling parameter $\alpha$. We show that the probability of transmission along the line as a function of the particle energy tends to the indicator function of the energy spectrum of $\Gamma$ as $\alpha\to\infty$. This effect can be used for a spectral analysis of the given graph $\Gamma$. Its applications include a control of a transmission along the line and spectral filtering. The result is illustrated with an example where $\Gamma$ is a loop exposed to a magnetic field. Two more quantum devices are designed using other special scale-invariant vertex couplings. They can serve as a band-stop filter and as a spectral separator, respectively.
\end{abstract}

\pacs{03.65.-w, 03.65.Nk, 73.63.Nm}
\keywords{quantum graph, scale-invariant coupling, Aharonov--Bohm effect, electronic transport
}

\maketitle

\section{Introduction}

Quantum graphs serve as mathematical models of mesoscopic networks built from thin nano-sized wires. Such wires can be made of semiconductors, carbon and other materials. With respect to the current rapid development of nanotechnologies, quantum graphs have a considerable application potential. Theoretical literature on the subject is now very extensive~\cite{EKST08}. In this paper we focus on the use of quantum graphs for a design of quantum 
devices that allow to control the transmission of an electron along a line according to its energy.

Various types of 
filtering capabilities of quantum graphs are known already for some time. One of the simplest such examples is the $\delta$-interaction on a line, which works as a high-pass filter. 
There exist also devices built upon graphs with more edges. For instance, it has been shown that a star graph with three arms and a properly chosen point interaction in the vertex can work as a high-pass/low-pass junction~\cite{CET09}. Very recently, a star graph with three edges coupled together by a scale-invariant point interaction has been used to design a controllable band-pass spectral filter~\cite{TC11}. Its controllability is achieved by an external potential on one of the edges. This construction has been generalized to quantum filters with multiple outputs and with multiple controllers~\cite{TC12}. There exist also different designs, e.g., a special trident filter~\cite{SCdL03}.

In this paper we consider transmission characteristics of a graph built from four ``components'': an ``input'' half line, an ``output'' half line, a graph $\Gamma$, and a certain special scale-invariant point interaction. Both the half lines are attached to the graph $\Gamma$ in one of its vertices, and these three objects are coupled together by the scale-invariant interaction. After the preliminaries in Section~\ref{Section: Preliminaries}, we 
study, 
in Section~\ref{Section: Main}, the transmission along the input--output line of the graph in question. 
We find that the transmission characteristics show a strong resonance behavior 
at energies belonging to the spectrum of $\Gamma$.
Therefore, the graph can be regarded as a band-pass spectral filter with narrow peak passbands. However, it is a slight abuse of terminology to call the graph ``filter'', because the passbands are not of an interval-type, but of a peak-type around the resonance energies.
Naturally, this approach enables the design of spectral resonance filters with various characteristics, depending on the chosen graph $\Gamma$.

Section~\ref{Section: Loop} illustrates the result with an example. We choose $\Gamma$ as a loop placed in a magnetic field $B$. The strength of $B$ determines the spectrum of $\Gamma$ via a simple formula, and, in consequence, it directly controls the passbands of the filter. Therefore, the device can be used as a spectral filter controllable by an external magnetic field.

The 
resonance behavior
of the proposed filter 
essentially relies on the scale-invariant vertex coupling. Since the physical interpretation of this coupling is not straighforward, we devote Section~\ref{Section: Approximation} to an explanation how to obtain it approximately by a use of several $\delta$-interactions, which are better understood.

It turns out that there exist other special scale-invariant couplings that enable the construction of quantum devices with other interesting characteristics. We discuss two such examples in Sections~\ref{Section: Inverse} and \ref{Section: Switch}.
In Section~\ref{Section: Inverse} we find a scale-invariant coupling that allows to build a band-stop resonance filter with stopbands located at the energies belonging to the spectrum of $\Gamma$. In other words, its transmission characteristics are complementary to the characteristics of the filter designed in Section~\ref{Section: Main}. In Section~\ref{Section: Switch} we consider a device with two outputs that works as a spectral separator. Broadly speaking, particles with energies outside the spectrum of $\Gamma$ are transmitted to output 1, while particles with energies from the spectrum of $\Gamma$ are transmitted to output 2. If the spectrum of $\Gamma$ is governed by an external field, the device can be used as a controllable switch or spectral junction.

\section{Preliminaries}\label{Section: Preliminaries}

Let $\Gamma$ be a graph, $V_\Gamma$ denote the set of its vertices and $E_\Gamma$ the set of its edges. We assume that $\Gamma$ is a metric graph, i.e., every edge $e\in E_\Gamma$ has its length $\ell_e>0$. If $n$ is the cardinality of $E_\Gamma$, then the wave function of a particle on $\Gamma$ has $n$ components: $\Psi=(\psi_1,\ldots,\psi_n)^T$, where the superscript $T$ stands for the transposition. Let there be scalar potentials $U_1,\ldots,U_n$ and vector potentials $A_1,\ldots,A_n$ on the graph edges. The Hamiltonian of a particle on $\Gamma$, denoted by $H_{\Gamma}$, acts as
$$
H_{\Gamma}
\begin{pmatrix}
\psi_1\\
\vdots\\
\psi_n
\end{pmatrix}
=\frac{1}{2m}
\begin{pmatrix}
\left(-\i\hbar\frac{\d}{\d x}-qA_1\right)^2\psi_1+U_1\cdot\psi_1\\
\vdots\\
\left(-\i\hbar\frac{\d}{\d x}-qA_n\right)^2\psi_n+U_n\cdot\psi_n
\end{pmatrix}
$$
for $\Psi\in L^2(\Gamma)$, where $m$ is the mass of the particle and $q$ is its charge.

In order to make the operator $H_{\Gamma}$ self-adjoint, it is necessary to impose proper boundary conditions in the graph vertices.
Let $v\in V_\Gamma$ be a vertex of degree $\deg(v)$ and $\psi_1,\ldots,\psi_{\deg(v)}$ be the wave function components at the edges incident to $v$. If $\psi_1(0),\ldots,\psi_{\deg(v)}(0)$ are the limits of those components in the vertex $v$ and $\psi_1'(0),\ldots,\psi_{\deg(v)}'(0)$ are the limits of their derivatives in $v$, taken in the outgoing sense, we denote
\begin{equation}
\Psi_v=
\begin{pmatrix}
\psi_1(0) \\
\vdots \\
\psi_{\deg(v)}(0)
\end{pmatrix}
\quad\text{and}\quad
\Psi_v'=
\begin{pmatrix}
\psi'_1(0) \\
\vdots \\
\psi'_{\deg(v)}(0)
\end{pmatrix} \,.
\end{equation}
The boundary conditions at every vertex $v$ couple $\Psi_v$ and $\Psi'_v$ in the way
\begin{equation}\label{b.c.}
A\Psi_v+B\Psi'_v=0\,,
\end{equation}
where $A$ and $B$ are complex $\deg(v)\times \deg(v)$ matrices such that~\cite{KS99}
\begin{equation}\label{KS}
\mathrm{rank}(A|B)=\deg(v) \quad\text{and}\quad AB^*=(AB^*)^*\,.
\end{equation}
The symbol $(A|B)$ denotes the $\deg(v)\times2\deg(v)$ matrix with $A,B$ forming
the first and the second $\deg(v)$ columns, respectively.

The requirements~\eqref{KS} are essentially equivalent to certain explicit constraints imposed on the matrix pair $(A,B)$. In this paper we will take advantage of the so-called $ST$-form~\cite{CET10}. It consists in expressing the boundary conditions~\eqref{b.c.} in the block form
\begin{equation}\label{ST}
\left(\begin{array}{cc}
I^{(r)} & T \\
0 & 0
\end{array}\right)\Psi'_v=
\left(\begin{array}{cc}
S & 0 \\
-T^* & I^{(\deg(v)-r)}
\end{array}\right)\Psi_v\,,
\end{equation}
where $r\in\{0,1,\ldots,\deg(v)\}$, $I^{(\deg(v))}$ is the identity matrix of size $\deg(v)$, $T$ is a general complex $r\times(\deg(v)-r)$ matrix and $S$ is a Hermitian $r\times r$ matrix.

If $S$ in~\eqref{ST} is a zero matrix, then the boundary conditions do not mix the values of functions and of their derivatives. Boundary conditions of that type are usually called scale-invariant vertex conditions~\cite{FKW07}. They define an interesting family of vertex couplings~\cite{FT00, NS00, SS02} with useful scattering properties~\cite{TC11,TC12,CT10}.

One of the most natural singular interactions is the $\delta$-coupling (also called ``$\delta$~potential''), which is characterized by boundary conditions
\begin{eqnarray}\label{delta}
&
\psi_j(0)=\psi_\ell(0)=:\psi(0) \quad \forall j,\ell=1\ldots,\deg(v)\,, 
\nonumber \\
&
\sum^{\deg(v)}_{j=1}\psi_j'(0)=\alpha\psi(0)\,,
\end{eqnarray}
where $\alpha\in\R\backslash\{0\}$ is the parameter of the coupling. The $\delta$-coupling does not belong to the scale-invariant family, but is prominent due to its simple interpretation:
It can be understood as a limit case of properly scaled smooth potentials~\cite{Ex96b}.

If we set $\alpha=0$ in~\eqref{delta}, we obtain boundary conditions of the \emph{free coupling},
\begin{eqnarray}\label{free}
&
\psi_j(0)=\psi_\ell(0) \quad \forall j,\ell=1\ldots,\deg(v)\,, 
\nonumber \\
&
\sum^{\deg(v)}_{j=1}\psi_j'(0)=0\,,
\end{eqnarray}
which is the most trivial type of point interaction in a quantum graph.

\section{Transmission along a line with an attached graph}\label{Section: Main}

Let $\Gamma=(V_\Gamma,E_\Gamma)$ be a finite connected metric graph. We denote the cardinality of $E_\Gamma$ by $n$ for the sake of brevity. From now on let $\Phi=(\phi_1,\ldots,\phi_n)^T$ denote the wave function on $\Gamma$. We assume that the graph edges are finite, the potentials on the graph edges are bounded and that the self-adjoint boundary conditions in the vertices of $\Gamma$ are chosen in the following way:
\begin{itemize}
\item There is a vertex $v_0\in V_\Gamma$ with free boundary conditions~\eqref{free}.
\item In all the remaining vertices $v\in V_\Gamma\backslash\{v_0\}$ we admit any self-adjoint boundary conditions~\eqref{b.c.}\&\eqref{KS} except for those virtually decoupling adjacent edges.
\end{itemize}

\begin{figure}[h]
\begin{center}
\includegraphics[width=4.2cm]{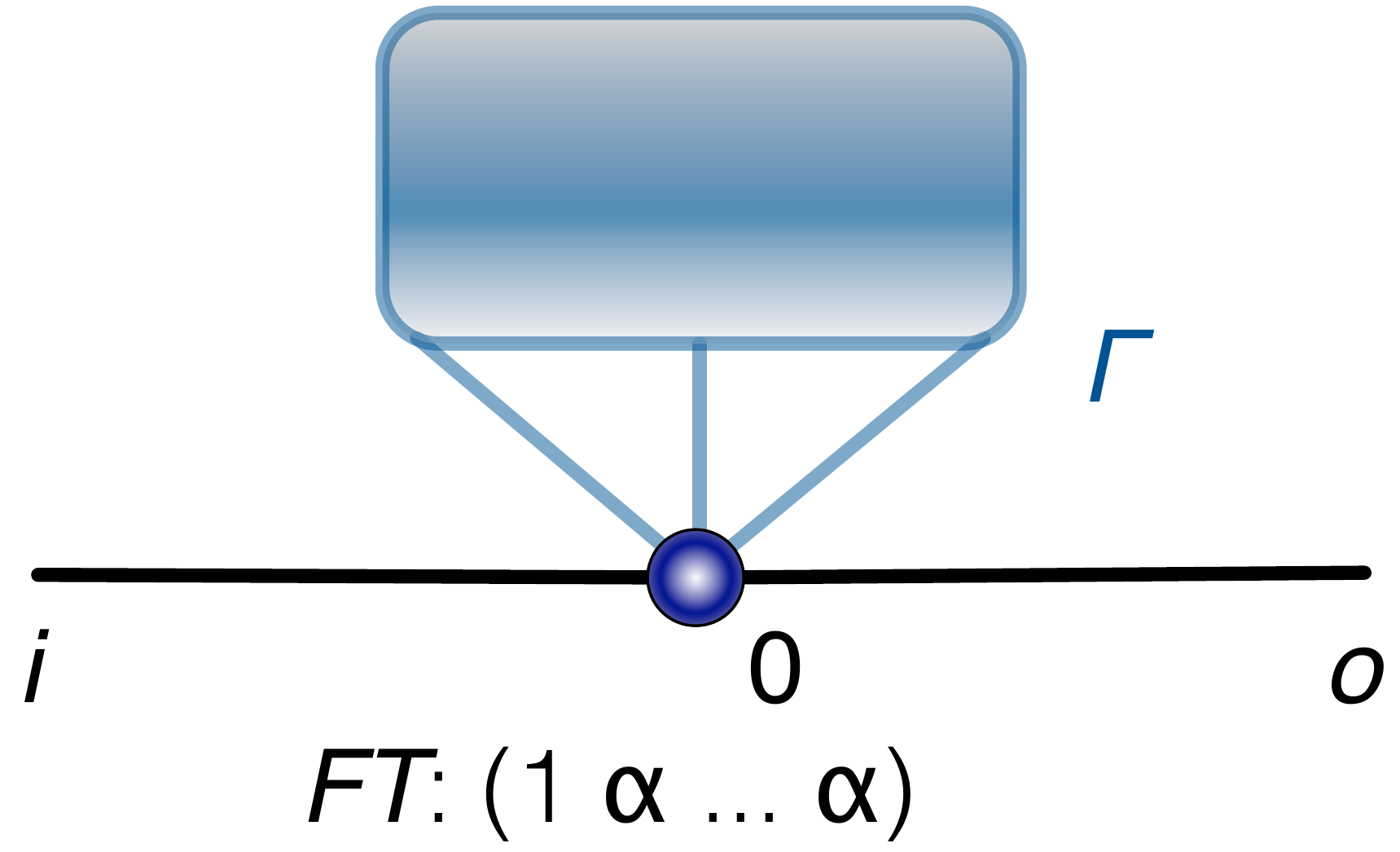}
\caption{A line with an attached graph $\Gamma$.}
\label{Fig: graph}
\end{center}
\end{figure}
Note that the presence of a vertex with the free coupling in the graph $\Gamma$ can be assumed without loss of generality. For example, any point inside a graph edge has this property, and, therefore, can be regarded as the vertex $v_0$.

Now let us consider a graph $\Gamma_{\mathbf{io}}$ which is constructed from $\Gamma$ by attaching two half lines to its vertex $v_0$ (Figure~\ref{Fig: graph}). We denote the half lines by $\mathbf{i}$ (``input'') and $\mathbf{o}$ (``output''). Furthermore, we use a new symbol for the vertex of $\Gamma_{\mathbf{io}}$ created from the vertex $v_0$ of $\Gamma$; it is convenient to denote it by $0$. Therefore, $\Gamma_{\mathbf{io}}=(\{0\}\cup V_\Gamma\backslash\{v_0\},E_\Gamma\cup\{\mathbf{i},\mathbf{o}\})$. Graph $\Gamma_{\mathbf{io}}$ can be also regarded as a straight ``input--output'' line, parametrized by $x\in(-\infty,+\infty)$, to which $\Gamma$ is attached in the point $0$.
The wave function component on the input half line will be denoted by $\psi_-$ and parametrized by $x\in(-\infty,0)$. The wave function component on the output half line will be denoted by $\psi_+$ and parametrized by $x\in(0,+\infty)$.

The half lines $\mathbf{i}$ and $\mathbf{o}$ carry no potentials. Therefore,
the Hamiltonian $H$ on $\Gamma_{\mathbf{io}}$ acts as

\begin{equation}\label{H}
H
\begin{pmatrix}
\psi_- \\
\psi_+ \\
\phi_1\\
\vdots\\
\phi_n
\end{pmatrix}
=\frac{1}{2m}
\begin{pmatrix}
-\hbar^2\psi_-'' \\
-\hbar^2\psi_+'' \\
\left(-\i\hbar\frac{\d}{\d x}-qA_1\right)^2\phi_1+U_1\cdot\phi_1\\
\vdots\\
\left(-\i\hbar\frac{\d}{\d x}-qA_n\right)^2\phi_n+U_n\cdot\phi_n
\end{pmatrix}
.
\end{equation}
We assume that the boundary conditions in each vertex $v\in V_\Gamma\backslash\{v_0\}$ of $\Gamma_{\mathbf{io}}$ are the same as in the corresponding vertex $v$ of $\Gamma$.
In the vertex $0$ of $\Gamma_{\mathbf{io}}$, we impose a coupling given by the scale-invariant boundary conditions
\begin{widetext}
\begin{equation}\label{bc}
\begin{pmatrix}
1 & 1 & \alpha & \cdots & \alpha \\
0 & 0 & 0 & \cdots & 0 \\
0 & 0 & 0 & \cdots & 0 \\
\vdots & \vdots & \vdots &  & \vdots \\
0 & 0 & 0 & \cdots & 0
\end{pmatrix}
\begin{pmatrix}
-\psi_-'(0) \\
\psi_+'(0) \\
\phi_1'(0) \\
\vdots \\
\phi_{n}'(0)
\end{pmatrix}
=
\begin{pmatrix}
0 & 0 & 0 & \cdots & 0 \\
-1 & 1 & 0 & \cdots & 0 \\
-\alpha & 0 & 1 &  & 0 \\
\vdots & \vdots &  & \ddots & \\
-\alpha & 0 & 0 &  & 1
\end{pmatrix}
\begin{pmatrix}
\psi_-(0) \\
\psi_+(0) \\
\phi_1(0) \\
\vdots \\
\phi_{n}(0)
\end{pmatrix}\,,
\end{equation}
\end{widetext}
where
\begin{itemize}
\item $\alpha>0$ is a parameter of the coupling,
\item $n=\deg(v_0)$,
\item functions $\phi_1(x),\ldots,\phi_{n}(x)$ are the wave function components on those edges $e\in E_\Gamma$ which are incident to the vertex $0$.
\end{itemize}
The left derivative of $\psi_-$ is taken with the minus sign, because the limits of derivatives of the wave function components in the graph vertices are conventionally considered in the outgoing sense.

It is convenient to rewrite the boundary conditions~\eqref{bc} as a set of equations:
\begin{subequations}\label{bc_sub}
\begin{gather}
-\psi_-'(0)+\psi_+'(0)+\alpha\sum_{j=1}^{n}\phi_j'(0)=0\,; \label{bc1} \\
\psi_-(0)=\psi_+(0); \label{bc2} \\
\alpha\psi_-(0)=\phi_j(0) \qquad\text{for all $j=1,\ldots,n$}\,. \label{bc3}
\end{gather}
\end{subequations}

\subsection*{Transmission along the input--output line}
Let us consider a particle of energy $E>0$ moving along the input half line towards the vertex $0$. When the particle reaches the vertex,
it is scattered into all incident edges. Therefore, the final-state wave function components on $\mathbf{i}$ and $\mathbf{o}$ take the form
\begin{subequations}\label{psi_io}
\begin{align}
\psi_-(x)&=\e^{ikx}+\mathcal{R}(k)\e^{-ikx}\,, \label{psi_i} \\
\psi_+(x)&=\mathcal{T}(k)\e^{ikx}\,, \label{psi_o}
\end{align}
\end{subequations}
where
\begin{equation}\label{k}
k=\frac{\sqrt{2mE}}{\hbar}
\end{equation}
is the wavenumber at the input--output line. The coefficient $\mathcal{R}(k)$ represents the reflection amplitude, and $\mathcal{T}(k)$ is the amplitude of transmission of the particle from the input half line $\mathbf{i}$ to the output half line $\mathbf{o}$.
The value $|\mathcal{T}(k)|^2$ represents the probability of transmission from the input half line to the output half line for the given wavenumber $k$. From now on we denote this probability by $\P(k)$.

Consider the following problem~\cite{SA00,Ku05}:
\begin{subequations}\label{Problem}
\begin{align}
&\frac{1}{2m}\left[\left(-\i\hbar\frac{\d}{\d x}-qA_j\right)^2\phi_j+U_j\cdot\phi_j\right]=\lambda\phi_j 
\nonumber \\
&\qquad
\forall j=1,\ldots,n\,, \label{Problem 1} \\
&\text{$\phi_1,\ldots,\phi_n$ satisfy the boundary conditions} \nonumber\\
 &\qquad\qquad \text{in every
$v\in V_\Gamma\backslash\{v_0\}$}\,, \label{Problem 2}\\
&\phi_j(0)=1 \qquad \text{for all $j=1,\ldots,n$}\,. \label{Problem 3}
\end{align}
\end{subequations}
Let us define
$$
\sigma_0=\{\lambda\in(0,+\infty)\;|\;\text{problem~\eqref{Problem} has no solution}\}\,.
$$

\begin{observation}\label{Obs. sigma0}
\begin{itemize}
\item If $\lambda\in\rho(H_{\Gamma})$, the problem~\eqref{Problem} has a unique solution. Hence $\sigma_0\subset\sigma(H_{\Gamma})$.
\item If $\lambda\in\sigma(H_{\Gamma})\backslash\sigma_0$, then
every solution of~\eqref{Problem} is an eigenfuction of $H_\Gamma$ corresponding to the eigenvalue $\lambda$. Consequently, $\sum_{j=1}^{n}\phi_j'(0)=0$.
\item If $\lambda\in\sigma_0$, then every eigenfunction of $H_{\Gamma}$ corresponding to the eigenvalue $\lambda$ satisfies $\phi_1(0)=\cdots=\phi_{n}(0)=0$.
\end{itemize}
\end{observation}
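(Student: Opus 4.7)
I would treat the three items in reverse order. The third item is a direct contrapositive: if $\phi$ is an eigenfunction of $H_\Gamma$ at $\lambda$, the free coupling \eqref{free} at $v_0$ forces a common value $\phi_1(0)=\cdots=\phi_n(0)=:c$; were $c$ nonzero, $c^{-1}\phi$ would satisfy \eqref{Problem 1}--\eqref{Problem 3} and hence solve \eqref{Problem}, contradicting $\lambda\in\sigma_0$. So $c=0$.

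For the second item, the key tool is a Green-type identity. Given a solution $\phi$ of \eqref{Problem} and an eigenfunction $\psi$ of $H_\Gamma$ at the same $\lambda$, with $\psi_j(0)=d$ the common vertex value forced by free coupling on $\psi$, I would integrate by parts on each edge of $\Gamma$ and sum. Because $\phi$ and $\psi$ satisfy the same ODE with the same spectral parameter, the volume contributions cancel and only a sum of symplectic boundary forms remains. At each $v\neq v_0$ both functions obey the same self-adjoint coupling, so by the symmetry relation $AB^*=(AB^*)^*$ in \eqref{KS} that boundary form vanishes. At $v_0$, inserting $\phi_j(0)=1$, $\psi_j(0)=d$ and $\sum_j\psi'_j(0)=0$, the surviving term simplifies to $-d\sum_j\phi'_j(0)$, so $d\sum_j\phi'_j(0)=0$. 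It remains to exhibit an eigenfunction $\psi$ with $d\neq 0$; this is the delicate part, and my plan is to show that if every eigenfunction at $\lambda$ had $d=0$ then the space of solutions of \eqref{Problem} would be empty, contradicting $\lambda\notin\sigma_0$. The argument would combine a dimension count (the set of solutions of \eqref{Problem} forms an affine space modeled on the $\lambda$-eigenspace of the Hamiltonian with Dirichlet coupling at $v_0$) with a self-adjointness consistency condition analogous to the one above.

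The first item asks for both uniqueness and existence when $\lambda\in\rho(H_\Gamma)$. Uniqueness is a variant of the second-item argument: the difference $\delta=\phi^{(1)}-\phi^{(2)}$ of two solutions satisfies $\delta_j(0)=0$ together with the ODE and the boundary conditions at every $v\neq v_0$; the Green identity as above forces $\sum_j\delta'_j(0)=0$ as well, so $\delta\in\mathrm{dom}(H_\Gamma)$ and $(H_\Gamma-\lambda)\delta=0$, whence $\delta\equiv 0$ because $\lambda\in\rho(H_\Gamma)$. Existence is best phrased as a finite-dimensional linear-algebra statement: the ODE solutions on $\Gamma$ form a $2|E_\Gamma|$-dimensional space, the self-adjoint conditions at vertices $v\neq v_0$ cut it down to a $\deg(v_0)$-dimensional subspace (generically), and the evaluation map onto vertex values at $v_0$ must carry $(1,\ldots,1)^T$ in its image when $\lambda\in\rho(H_\Gamma)$. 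Proving this last surjectivity is the main obstacle: the no-virtual-decoupling hypothesis at the vertices $v\neq v_0$ and the resolvent condition on $\lambda$ must interact in a nontrivial way, and I would expect to close the gap either via Krein's parametrization of self-adjoint extensions at $v_0$ or by a careful transfer-matrix/Wronskian calculation along the edges incident to $v_0$.
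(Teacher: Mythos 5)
The paper states this Observation without proof, so there is no argument of the authors' to compare yours against; I can only assess your proposal on its own terms. Your proof of the third item is complete and correct, and your Lagrange-identity computations are sound: they give the uniqueness half of the first item, and (applied to the difference of two solutions of \eqref{Problem}) they show that all solutions of \eqref{Problem} share the same value of $\sum_j\phi_j'(0)$, so that $\Lambda$ is well defined. But the two steps you explicitly flag as unfinished are not merely delicate --- they cannot be closed, because the first and second items are false as stated; your instinct about where the argument thins out points exactly at where it breaks.

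For the first item, the obstruction to solvability of \eqref{Problem} is the spectrum of the \emph{Dirichlet-decoupled} operator $H_D$ at $v_0$, not of $H_\Gamma$. If $\lambda\notin\sigma(H_D)$, the evaluation map $\phi\mapsto(\phi_1(0),\dots,\phi_n(0))$ on the (at least $n$-dimensional) space of solutions of \eqref{Problem 1}--\eqref{Problem 2} has trivial kernel, hence is a bijection onto $\C^n$; this yields existence, uniqueness, and the correct inclusion $\sigma_0\subset\sigma(H_D)$, and it is what your transfer-matrix or Krein route will actually deliver. The claim as printed fails already for the paper's own example: for the loop of Section~\ref{Section: Loop} with $B=0$ and $k\ell=\pi$, every solution of $-\phi''=k^2\phi$ obeys $\phi(\ell)=-\phi(0)$, so \eqref{Problem 3} is unsolvable and $\lambda\in\sigma_0$, while the periodic spectrum requires $k\ell\in2\pi\Z$, so $\lambda\in\rho(H_\Gamma)$. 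Hence $\sigma_0\not\subset\sigma(H_\Gamma)$, and the characterization \eqref{sigma0} fails with it; do not try to prove the surjectivity you identified as the main obstacle.

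For the second item, your planned completion --- if every eigenfunction has $d=0$ then \eqref{Problem} is unsolvable --- is also false. Take $\Gamma$ to be a segment of length $L$ from $v_0$ (degree one, so the free coupling is Neumann) to $v_1$, with a loop of length $c$ attached at $v_1$ by the free coupling, and let $k=2\pi/c$ with $kL\notin\frac{\pi}{2}\Z$. The $\lambda$-eigenspace of $H_\Gamma$ is spanned by $\sin kx$ on the loop extended by zero on the segment, so every eigenfunction vanishes at $v_0$; yet \eqref{Problem} is solved by taking $\cos kx+\tan(kL)\sin kx$ on the segment, so $\lambda\notin\sigma_0$ and $\Lambda(\lambda)=k\tan(kL)\neq0$. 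The solution is therefore not an eigenfunction, and Observation~\ref{Obs. Lambda=0} fails as well. What your Green-identity argument genuinely proves, and what suffices to salvage Corollary~\ref{T coro} in corrected form, is this: for $\lambda\notin\sigma_0$ one has $\Lambda(\lambda)=0$ if and only if $H_\Gamma$ possesses an eigenfunction at $\lambda$ with nonvanishing value at $v_0$. The resonance peaks thus sit exactly at those eigenvalues of $H_\Gamma$ whose eigenspace is not invisible from the contact vertex, which is weaker than the Observation but is the true content behind it.
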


The set $\sigma_0$ can be equivalently characterized by the condition
\begin{equation}\label{sigma0}
\lambda\in\sigma_0 \  \Leftrightarrow \  \left[\; \lambda\in\sigma(H_{\Gamma}) \  \wedge \  \left(\, H_{\Gamma}\Phi=\lambda\Phi \,\Rightarrow\, \phi_1(0)=0 \, \right) \;\right]\,.
\end{equation}
For every $\lambda\in(0,+\infty)\backslash\sigma_0$, we define the Dirichlet-to-Neumann function~\cite{Ku05,Ca11,SU90} as
\begin{equation}\label{Lambda}
\Lambda(\lambda)=\sum_{j=1}^{n}\phi_j'(0)\,,
\end{equation}
where $(\phi_1,\ldots,\phi_n)$ is a solution of the problem~\eqref{Problem}.
The function $\Lambda$ is well-defined with regard to Observation~\ref{Obs. sigma0}. Note that $\Lambda(\lambda)=0$ together with~\eqref{Problem 3} means that $\Phi:=(\phi_1,\ldots,\phi_n)^T$ obeys the free boundary conditions in $v_0$. Hence we obtain:
\begin{observation}\label{Obs. Lambda=0}
For every $\lambda\in(0,+\infty)\backslash\sigma_0$,
$$
\Lambda(\lambda)=0 \quad\Leftrightarrow\quad \lambda\in\sigma(H_{\Gamma})\,.
$$
\end{observation}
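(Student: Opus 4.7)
The plan is to read the equivalence as a direct consequence of Observation~\ref{Obs. sigma0}, which does all the structural work; the rest is matching definitions. Recall that for $\lambda\in(0,+\infty)\setminus\sigma_0$ the problem~\eqref{Problem} admits a solution, so there exists some $\Phi=(\phi_1,\ldots,\phi_n)^T$ through which $\Lambda(\lambda)=\sum_{j=1}^n\phi_j'(0)$ is well-defined, and both directions will be formulated in terms of such a $\Phi$.

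For the forward implication, I would assume $\Lambda(\lambda)=0$ and check that $\Phi$ is an eigenfunction of $H_\Gamma$ with eigenvalue $\lambda$. On each edge this is the content of~\eqref{Problem 1}. The vertex conditions at every $v\in V_\Gamma\setminus\{v_0\}$ hold by~\eqref{Problem 2}. At $v_0$ the graph $\Gamma$ carries the free coupling~\eqref{free}, which requires a common boundary value at $v_0$ and a vanishing sum of outgoing derivatives; the former is~\eqref{Problem 3} (each $\phi_j(0)=1$), while the latter is exactly the hypothesis $\Lambda(\lambda)=0$. Hence $\Phi$ is an eigenfunction of $H_\Gamma$, so $\lambda\in\sigma(H_\Gamma)$.

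For the reverse implication, I would take $\lambda\in\sigma(H_\Gamma)\setminus\sigma_0$ and invoke the middle bullet of Observation~\ref{Obs. sigma0}: since $\lambda\notin\sigma_0$ a solution to~\eqref{Problem} exists, and every such solution is automatically an $H_\Gamma$-eigenfunction corresponding to $\lambda$. In particular it satisfies the free coupling at $v_0$, which forces $\sum_{j=1}^n\phi_j'(0)=0$, i.e., $\Lambda(\lambda)=0$.

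There is no real obstacle here, as the substantive content is packed into the previous observation; the only bookkeeping point worth emphasizing in the write-up is that the same solution $\Phi$ of~\eqref{Problem} is used to witness both $\Lambda(\lambda)$ and the eigenfunction status, which is legitimate precisely because $\lambda\notin\sigma_0$ guarantees existence and Observation~\ref{Obs. sigma0} (middle bullet) controls uniqueness up to the eigenspace structure. Thus the equivalence reduces to the simple observation that the definition of the free coupling at $v_0$ matches the pair of conditions $\phi_j(0)=1$ (from~\eqref{Problem 3}) and $\sum_j\phi_j'(0)=0$ (from $\Lambda(\lambda)=0$).
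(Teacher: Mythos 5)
Your proof is correct and follows essentially the same route as the paper, which simply notes that $\Lambda(\lambda)=0$ together with~\eqref{Problem 3} means $\Phi$ obeys the free boundary conditions at $v_0$ (hence is an eigenfunction), and relies on the second bullet of Observation~\ref{Obs. sigma0} for the converse. Your write-up just makes the paper's one-line justification explicit.
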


\begin{proposition}\label{Prop. T}
Let $\mathcal{T}(k)$ be the amplitude of the transmission to the output line for an incoming particle of energy $E=\frac{\hbar^2k^2}{2m}$.
It holds:
\begin{itemize}
\item[(i)] If $E\in(0,+\infty)\backslash\sigma_0$, then
\begin{equation}\label{T}
\mathcal{T}(k)=\frac{1}{1+\alpha^2\frac{\Lambda(E)}{2\i k}}\,.
\end{equation}
\item[(ii)] If $E\in\sigma_0$, then $\mathcal{T}(k)=0$.
\end{itemize}
\end{proposition}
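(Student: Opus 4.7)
The plan is to substitute the scattering ansatz \eqref{psi_io} into the three vertex equations \eqref{bc1}--\eqref{bc3} and then use Observation~\ref{Obs. sigma0} to express the derivative sum $\sum_{j=1}^{n}\phi_j'(0)$ through the Dirichlet-to-Neumann function $\Lambda(E)$, after which both claims reduce to elementary linear algebra.

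First I would read off the boundary data from the ansatz: $\psi_-(0)=1+\mathcal R$, $\psi_+(0)=\mathcal T$, $\psi_-'(0)=\i k(1-\mathcal R)$, $\psi_+'(0)=\i k\mathcal T$. Equation \eqref{bc2} forces $\mathcal R=\mathcal T-1$, equation \eqref{bc3} collapses all the traces $\phi_j(0)$ onto the single value $\alpha\mathcal T$, and equation \eqref{bc1}, after the elimination of $\mathcal R$, boils down to the scalar identity $2\i k(1-\mathcal T)=\alpha\sum_{j=1}^{n}\phi_j'(0)$.

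Second, for case (i) I would exploit linearity of \eqref{Problem 1} and of the vertex conditions at $V_\Gamma\setminus\{v_0\}$. Since $E\notin\sigma_0$, Observation~\ref{Obs. sigma0} provides a solution $(\phi_1^0,\ldots,\phi_n^0)$ of \eqref{Problem} with $\sum_{j}(\phi_j^0)'(0)=\Lambda(E)$. The tuple $(\alpha\mathcal T\phi_j^0)_j$ then satisfies the eigenvalue equation, the vertex conditions at every $v\in V_\Gamma\setminus\{v_0\}$, and the trace condition $\phi_j(0)=\alpha\mathcal T$ demanded by \eqref{bc3}, so it is an admissible choice for the wave function on $\Gamma$. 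Inserting $\sum_j\phi_j'(0)=\alpha\mathcal T\,\Lambda(E)$ into the scalar identity above leaves one linear equation in $\mathcal T$, whose solution is precisely \eqref{T}. Case (ii) then follows by a short contradiction from the same reduction: if $E\in\sigma_0$ and $\mathcal T\neq 0$, rescaling by $1/(\alpha\mathcal T)$ would turn $(\phi_j)$ into a solution of \eqref{Problem}, contradicting the definition of $\sigma_0$; hence $\mathcal T=0$.

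The main conceptual point to watch lies in case (i) at energies $E\in\sigma(H_\Gamma)\setminus\sigma_0$, where \eqref{Problem} may admit many solutions, so one must check that the derivative sum, and hence $\Lambda(E)$, is independent of the chosen solution. Observation~\ref{Obs. sigma0} handles this cleanly: any two solutions of \eqref{Problem} differ by an $H_\Gamma$-eigenfunction with vanishing trace at $v_0$, which, thanks to the free coupling at $v_0$, automatically satisfies $\sum_j\phi_j'(0)=0$. With this well-definedness in hand the argument outlined above goes through unchanged.
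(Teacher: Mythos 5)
Your proposal is correct and follows essentially the same route as the paper: substitute the scattering ansatz into the vertex conditions \eqref{bc_sub}, take the graph component to be a constant multiple of a solution of \eqref{Problem} so that $\sum_j\phi_j'(0)$ becomes $\alpha\mathcal{T}\,\Lambda(E)$, and solve the resulting linear equation for $\mathcal{T}$, with case (ii) handled by the same rescaling/contradiction argument forcing the traces $\phi_j(0)$ to vanish. Your extra remark on the well-definedness of $\Lambda$ at $E\in\sigma(H_\Gamma)\setminus\sigma_0$ is exactly what the paper delegates to Observation~\ref{Obs. sigma0}.
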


\begin{proof}

(i)\quad If $E\in(0,+\infty)\backslash\sigma_0$, the problem~\eqref{Problem} has a solution $\Phi=(\tilde{\phi}_1,\ldots,\tilde{\phi}_{n})^T$. We set
\begin{equation}\label{_Psi}
\Psi:=(\e^{\i kx}+\mathcal{R}(k)\e^{-\i kx},\mathcal{T}(k)\e^{\i kx},c\tilde{\phi}_1,\ldots,c\tilde{\phi}_{n})^T\,,
\end{equation}
where $\mathcal{R}(k)$ and $\mathcal{T}(k)$ are the sought scattering amplitudes and $c\in\C$ is a constant to be specified later. The function $\Psi$ obviously satisfies the system of differential equations $H\Psi=E\Psi$ due to~\eqref{Problem 1}. Moreover, $\Psi$ obeys the boundary conditions in each $v\in V_\Gamma\backslash\{v_0\}$ due to the assumption~\eqref{Problem 2}. To sum up, $\Psi$ is the final-state wave function on $\Gamma_{\mathbf{io}}$ if and only if $\Psi$ obeys the boundary conditions~\eqref{bc_sub} in the vertex $0$. We rewrite the boundary conditions~\eqref{bc_sub} using the properties of $\Phi$, cf.~\eqref{Problem}:
\begin{gather*}
\i k(-1+\mathcal{R}(k)+\mathcal{T}(k))+\alpha c\Lambda(E)=0\,; \\
1+\mathcal{R}(k)=\mathcal{T}(k); \\
\alpha(1+\mathcal{R}(k))=c \qquad\text{for all $j=1,\ldots,n$}\,.
\end{gather*}
This system yields
$c=\frac{2\i k\alpha}{2\i k+\alpha^2\Lambda(E)}$ and $\mathcal{T}(k)=\frac{1}{1+\alpha^2\frac{\Lambda(E)}{2\i k}}$.

(ii)\quad If $E\in\sigma_0$, let $\Psi=(\e^{\i kx}+\mathcal{R}(k)\e^{-\i kx},\mathcal{T}(k)\e^{\i kx},\phi_1,\ldots,\phi_{n})^T$ be the sought final-state wave function on $\Gamma_{\mathbf{io}}$. Then the $n$-tuple $\phi_1,\ldots,\phi_{n}$ obviously satisfies~\eqref{Problem 1} and \eqref{Problem 2}. Moreover, $\Psi$ obeys the boundary conditions~\eqref{bc3}, hence $\phi_1(0)=\cdots=\phi_{n}(0)$. The assumption $E\in\sigma_0$ means that the problem~\eqref{Problem} does not have a solution, therefore necessarily $\phi_1(0)=\cdots=\phi_{n}(0)=0$. Then $0=1+\mathcal{R}(k)=\mathcal{T}(k)$ according to~\eqref{bc3} and \eqref{bc2}.

\end{proof}

Proposition~\ref{Prop. T} together with Observation~\ref{Obs. Lambda=0} allows to calculate the limit of the transmission probability $\P(k)$ for the coupling parameter $\alpha\to\infty$.

\begin{corollary}\label{T coro}
For every $k>0$,
$$
\lim_{\alpha\to\infty}\P(k)=
\left\{\begin{array}{cl}
1 & \text{if $\frac{\hbar^2k^2}{2m}\in\sigma(H_{\Gamma})\backslash\sigma_0$}\,, \\
0 & \text{otherwise}\,.
\end{array}\right.
$$
\end{corollary}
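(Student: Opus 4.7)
The plan is to prove Corollary~\ref{T coro} by a direct case analysis using Proposition~\ref{Prop. T} and Observation~\ref{Obs. Lambda=0}, with $E=\hbar^2k^2/(2m)$ fixed throughout.

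First I would dispose of the trivial case $E\in\sigma_0$. Here Proposition~\ref{Prop. T}(ii) yields $\mathcal{T}(k)=0$ for every $\alpha>0$, hence $\P(k)=0$ independently of $\alpha$ and the limit equals $0$, which matches the ``otherwise'' branch of the corollary since $E\in\sigma_0\subset\sigma(H_\Gamma)$ but $E\notin\sigma(H_\Gamma)\setminus\sigma_0$.

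Next I would handle the remaining case $E\in(0,+\infty)\setminus\sigma_0$, where Proposition~\ref{Prop. T}(i) gives the explicit formula
\begin{equation*}
\mathcal{T}(k)=\frac{1}{1+\alpha^2\frac{\Lambda(E)}{2\i k}}\,,
\end{equation*}
and I would split according to whether $E\in\sigma(H_\Gamma)$ or not. If $E\in\sigma(H_\Gamma)\setminus\sigma_0$, Observation~\ref{Obs. Lambda=0} gives $\Lambda(E)=0$, so $\mathcal{T}(k)=1$ identically in $\alpha$, hence $\P(k)\equiv 1$ and the limit is $1$. If instead $E\notin\sigma(H_\Gamma)$, then (still by Observation~\ref{Obs. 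Lambda=0}) $\Lambda(E)\neq 0$, so the term $\alpha^2\Lambda(E)/(2\i k)$ has modulus tending to $+\infty$ as $\alpha\to\infty$; therefore $\mathcal{T}(k)\to 0$ and $\P(k)=|\mathcal{T}(k)|^2\to 0$.

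Collecting the three subcases exactly reproduces the claimed dichotomy. There is essentially no obstacle here: all the work has already been done in Proposition~\ref{Prop. T} (which encodes the scattering analysis) and in Observation~\ref{Obs. Lambda=0} (which identifies the zero set of $\Lambda$ with $\sigma(H_\Gamma)\setminus\sigma_0$). The only mild subtlety worth flagging is that one must use the sharper equivalence provided by Observation~\ref{Obs. Lambda=0}, rather than merely the implication $\Lambda(E)=0\Rightarrow E\in\sigma(H_\Gamma)$, in order to guarantee $\Lambda(E)\neq 0$ throughout $(0,+\infty)\setminus(\sigma(H_\Gamma)\cup\sigma_0)$ and thus get the limit $0$ in the non-resonant regime.
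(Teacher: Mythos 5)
Your proof is correct and follows exactly the route the paper intends: the paper states the corollary without a written proof, noting only that it follows from Proposition~\ref{Prop. T} together with Observation~\ref{Obs. Lambda=0}, and your three-way case analysis (including the correct observation that $\Lambda(E)\neq0$ off $\sigma(H_\Gamma)\cup\sigma_0$ forces $\mathcal{T}(k)\to0$) is precisely the argument being left to the reader.
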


Therefore, if $\alpha\gg1$, the transmission probability is a function with sharp peaks attaining $1$ located just at the points $k$ corresponding to energies $E\in\sigma(H_{\Gamma})\backslash\sigma_0$.
It means that this quantum device works as a band-pass spectral filter.

\begin{remark}\label{Rem. sigma0}
The set $\sigma_0$ can be made empty by a convenient choice of the vertex $v_0$ in the graph $\Gamma$; usually it suffices to take $v_0$ as a random point inside an edge of $\Gamma$. Then $\P(k)$ converges to the characteristic (indicator) function of the set $\{k>0\,|\,\frac{\hbar^2k^2}{2m}\in\sigma(H_\Gamma)\}$ in the limit $\alpha\to\infty$.
\end{remark}

\begin{remark}
Notice that the validity of the result obtained in Proposition~\ref{Prop. T} does not rely on the fact that $\Gamma$ is a graph. The proposition can be formulated in a similar way for $\Gamma$ being a structure comprising one-, two- and three-dimensional objects which is attached to the input--output line via some of its one-dimensional lines (``antennas'')~\cite{ES97}.
Also graphs $\Gamma$ with infinite edges can be considered~\cite{TC11,TC12}.
\end{remark}

A natural application of the phenomenon arises in spectral filtering. Let $\Gamma_{\mathbf{io}}$ be constructed for $\alpha\gg1$. Let particles of various energies be sent along the input line to the vertex $0$. Then particles with $E\approx\lambda\in\sigma(H_{\Gamma})$ pass through the vertex $0$ to the output line, whereas particles of other energies are reflected or deflected to the graph $\Gamma$. If moreover the spectrum of $\Gamma$ can be adjusted by external fields, we obtain a controllable spectral filter. An example will be studied in the next section.

\section{Quantum spectral filter controlled by magnetic field}\label{Section: Loop}

Now we consider a concrete example of the filter developed in Section~\ref{Section: Main}. In this example, $\Gamma$ is an Aharonov-Bohm ring \cite{AB59}, a loop of length $\ell$, as depicted in Figure~\ref{Fig: loop}.
We denote the wave function component on the loop by $\phi$ and parametrize it by $x\in(0,\ell)$. The components on the input half line and on the output half line are denoted by $\psi_-$ and $\psi_+$, respectively, in accordance with Section~\ref{Section: Main}.
\begin{figure}[h]
\begin{center}
\includegraphics[width=4.2cm]{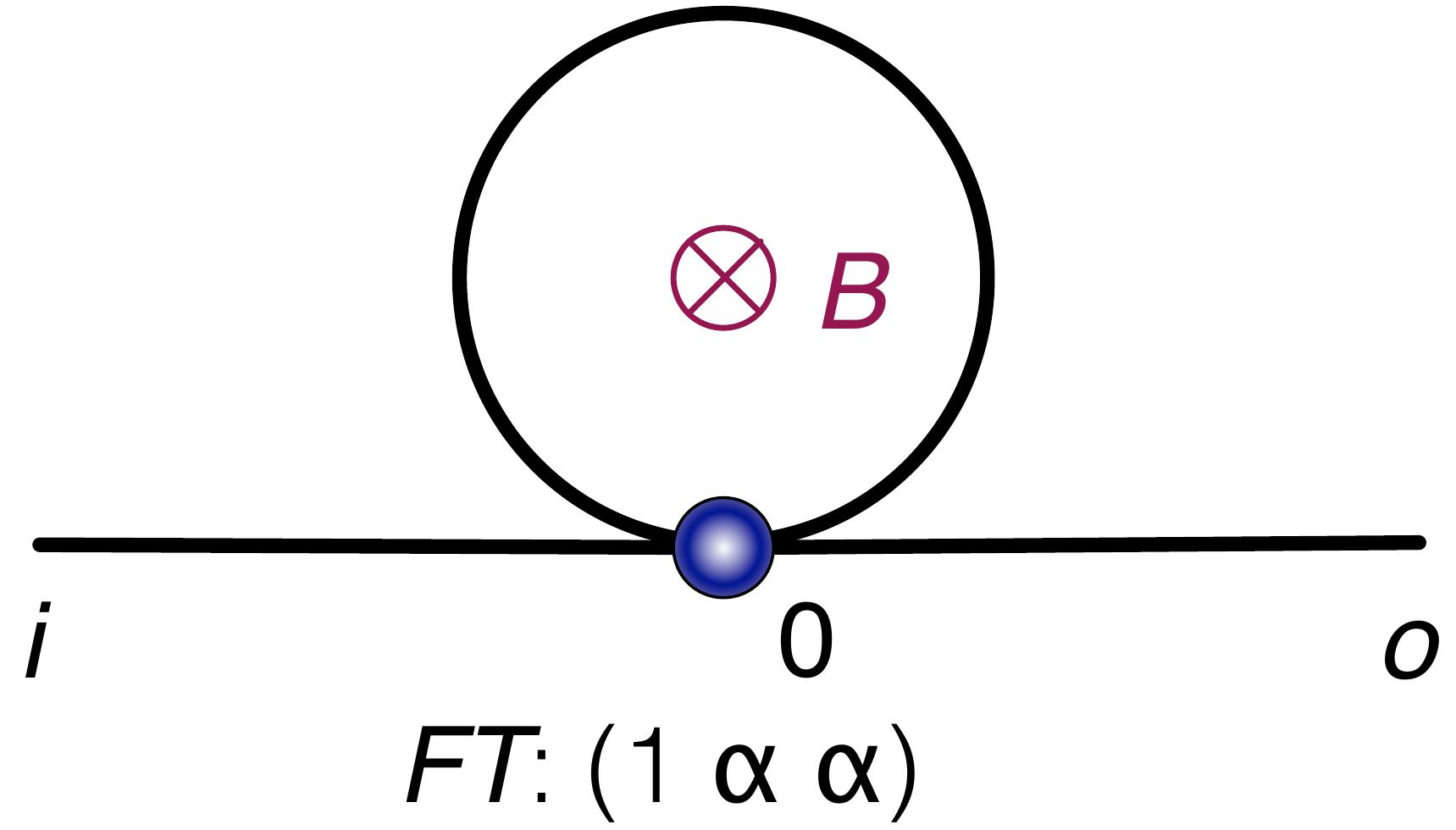}
\caption{A line with an attached loop.}
\label{Fig: loop}
\end{center}
\end{figure}

The vertex $0$, in which the input-output line and the loop are connected, has degree $4$. The scale-invariant coupling~\eqref{bc} in $0$ is thus given by the boundary conditions
\begin{equation}\label{bc loop}
\begin{pmatrix}
  1 & 1 & \alpha & \alpha \\
  0 & 0 & 0 & 0 \\
  0 & 0 & 0 & 0 \\
  0 & 0 & 0 & 0
\end{pmatrix}
\begin{pmatrix}
  -\psi_-'(0) \\
  \psi_+'(0) \\
  \phi'(0) \\
  -\phi'(\ell)
\end{pmatrix}
=
\begin{pmatrix}
  0 & 0 & 0 & 0 \\
 -1 & 1 & 0 & 0 \\
 -\alpha & 0 & 1 & 0 \\
 -\alpha & 0 & 0 & 1
\end{pmatrix}
\begin{pmatrix}
  \psi_-(0) \\
  \psi_+(0) \\
  \phi(0) \\
  \phi(\ell)
\end{pmatrix}.
\end{equation}

When the graph is exposed to a homogeneous magnetic field, the magnetic flux through the loop equals
$$
\Phi=\iint_{S}\vec{B}\cdot\d\vec{S}\,,
$$
where $S$ is the area of the loop (in case of a circle, $S=\ell^2/(4\pi)$).
The flux $\Phi$ can be expressed in terms of the magnetic vector potential $\vec{A}$,
$$
\Phi=\oint_{\partial S}\vec{A}\cdot\d\vec{l}\,.
$$
Therefore, if the magnetic field is perpendicular to the graph plane,
the strength of the vector potential on the loop is given as
\begin{equation}\label{A}
A=\frac{S}{\ell}\cdot B\,.
\end{equation}
The corresponding wave function component on the loop takes the form
\begin{equation}\label{C phi}
\phi(x)=C^+\e^{\i(\frac{q}{\hbar}A+k)x}+C^-\e^{\i(\frac{q}{\hbar}A-k)x} \quad\text{for all $x\in(0,\ell)$}\,.
\end{equation}
In order to find the transmission amplitude along the input--output line, we determine the Dirichlet-to-Neumann function. The condition $\phi(0)=\phi(\ell)=1$ means
$$
C^++C^-=C^+\e^{\i(\frac{q}{\hbar}A+k)\ell}+C^-\e^{\i(\frac{q}{\hbar}A-k)\ell}=1\,,
$$
which leads to
\begin{align*}
&
C^+=-\frac{1}{\sin k\ell}\e^{-\i\frac{(A+k)\ell}{2}}\sin\frac{(A-k)\ell}{2}\,, \\
&
C^-=\frac{1}{\sin k\ell}\e^{\i\frac{(-A+k)\ell}{2}}\sin\frac{(A+k)\ell}{2}\,.
\end{align*}
We substitute from here into expression~\eqref{C phi} and calculate $\Lambda(E)=\phi'(0_+)-\phi'(\ell_-)$. We obtain
\begin{align*}
\Lambda(E)&=-\frac{4k}{\sin k\ell}\sin\frac{(\frac{q}{\hbar}A-k)\ell}{2}\sin\frac{(\frac{q}{\hbar}A+k)\ell}{2} \\
&
=-2k\frac{\cos k\ell-\cos\frac{qBS}{\hbar}}{\sin k\ell}\,,
\end{align*}
where equation~\eqref{A} has been used to express the magnetic potential $A$ in terms of the magnetic field $B$.

\begin{observation}\label{ObsLambda}
There holds
\begin{itemize}
\item[(i)] $\lambda\in\sigma(H_{\Gamma})\Rightarrow\phi(0)\neq0$. Consequently, $\sigma_0=\emptyset$.
\item[(ii)] $\Lambda(\frac{\hbar^2k^2}{2m})$ as a function of $k$ has period $2\pi/\ell$.
\item[(iii)] If $E$ is fixed, then $\Lambda(E)$ as a function of $B$ has period $\frac{2\pi\hbar}{qS}$.
\item[(iv)] The equation $\Lambda(\frac{\hbar^2k^2}{2m})=0$ has exactly one solution in each of the intervals $[N\pi/\ell,(N+1)\pi/\ell]$, $N\in\N_0$, namely
\begin{equation}\label{peaks}
k_N=\left\{\begin{array}{cl}
\frac{q\{B\}S}{\hbar\ell}+\frac{N\pi}{\ell} & \text{if $N$ is even}\,, \\
-\frac{q\{B\}S}{\hbar\ell}+\frac{N\pi}{\ell} & \text{if $N$ is odd}\,,
\end{array}\right.
\end{equation}
where $\{B\}=B-\left\lfloor\frac{qBS}{2\pi\hbar}\right\rfloor\cdot\frac{2\pi\hbar}{qS}$.
\end{itemize}
\end{observation}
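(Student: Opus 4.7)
\textbf{Plan for Observation~\ref{ObsLambda}.} The four items split cleanly: (i) rests on identifying $\sigma(H_\Gamma)$ explicitly, while (ii)--(iv) reduce to direct inspection of the closed-form expression for $\Lambda(E)$ derived just above.

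For (i), I would first notice that the free coupling~\eqref{free} at the degree-two vertex $v_0$ collapses to the periodic boundary conditions $\phi(0)=\phi(\ell)$ and $\phi'(0)=\phi'(\ell)$. Hence $H_\Gamma$ is the magnetic Schr\"odinger operator on a circle of circumference $\ell$; after the standard gauge transformation $\phi(x)=\e^{\i qAx/\hbar}u(x)$ one obtains a free problem for $u$ with twisted-periodic boundary conditions, whose solutions combine back to the eigenfunctions $\phi_n(x)=\e^{2\pi\i nx/\ell}$, $n\in\Z$. Since $\phi_n(0)=1\neq 0$, every $\lambda\in\sigma(H_\Gamma)$ admits an eigenfunction nonvanishing at $v_0$, and the characterization~\eqref{sigma0} immediately yields $\sigma_0=\emptyset$.

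Items (ii) and (iii) are read off the formula $\Lambda(E)=-2k(\cos k\ell-\cos(qBS/\hbar))/\sin k\ell$. The $B$-dependence is carried solely by $\cos(qBS/\hbar)$, which has period $2\pi\hbar/(qS)$ in $B$, settling (iii). The $k$-dependent trigonometric factor $(\cos k\ell-\cos(qBS/\hbar))/\sin k\ell$ is manifestly $2\pi/\ell$-periodic in $k$; this is the content of (ii) at the level of the structure that controls zeros, poles and resonances (the residual linear prefactor $k$ does not affect these locations).

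For (iv), the equation $\Lambda(E)=0$ reduces, away from the pole set $k\in(\pi/\ell)\Z$, to $\cos k\ell=\cos(qBS/\hbar)$; the potential coincidences $\sin k\ell=0$ are handled separately using Observation~\ref{Obs. Lambda=0} together with part (i). Setting $\beta:=qBS/\hbar$, the solutions are $k\ell\equiv\pm\beta\pmod{2\pi}$. Because cosine is strictly monotone on each interval of length $\pi$, there is exactly one positive solution $k$ in each $[N\pi/\ell,(N+1)\pi/\ell]$. The main bookkeeping step --- which is where most of the work goes --- is to rewrite this unique solution in the closed form~\eqref{peaks}: one performs a case analysis on the parity of $N$ and on the subinterval of $[0,2\pi)$ in which $q\{B\}S/\hbar$ lies, and uses the definition of $\{B\}$ to merge the $\pm$ branches of $\arccos\circ\cos$ into the single piecewise expression displayed in the statement.
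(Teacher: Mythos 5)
Your reconstruction matches the paper's intent: the Observation is stated there without proof, as a direct read-off from the closed form $\Lambda(E)=-2k\bigl(\cos k\ell-\cos\frac{qBS}{\hbar}\bigr)/\sin k\ell$ derived immediately above, plus the explicit diagonalization of the magnetic Laplacian on the circle for (i). Your treatment of (i) (eigenfunctions $\e^{2\pi\i nx/\ell}$ never vanish at $v_0$, then invoke the characterization~\eqref{sigma0}) and of (iii) is exactly right. Your caveat on (ii) is also well taken: because of the prefactor $-2k$, $\Lambda(\frac{\hbar^2k^2}{2m})$ is \emph{not} literally $2\pi/\ell$-periodic in $k$; only the combination $\Lambda(E)/(2\i k)$ --- hence $\mathcal{T}(k)$, $\P^{(B)}(k)$ and the zero/pole set --- is. So on (ii) you are, if anything, more careful than the source.

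The one genuine gap is in (iv), and it sits precisely in the ``bookkeeping step'' you defer. The existence and uniqueness argument (strict monotonicity of $\cos k\ell$ on each $[N\pi/\ell,(N+1)\pi/\ell]$) is fine, but the case analysis you promise does \emph{not} merge into formula~\eqref{peaks} as printed for all $B$. Write $\beta_0:=q\{B\}S/\hbar\in[0,2\pi)$. If $\beta_0\in(\pi,2\pi)$, then for even $N$ the proposed root $k_N\ell=\beta_0+N\pi$ lies outside $[N\pi,(N+1)\pi]$, and for odd $N$ the proposed $k_N\ell=-\beta_0+N\pi$ can even be negative (take $\beta_0=3\pi/2$, $N=0,1$: the formula gives $3\pi/2$ and $-\pi/2$, while the actual unique roots in $[0,\pi]$ and $[\pi,2\pi]$ are $\pi/2$ and $3\pi/2$). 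On that half-range the parity roles in~\eqref{peaks} must be interchanged, or equivalently one must replace $\beta_0$ by $\min(\beta_0,2\pi-\beta_0)$. A complete proof therefore has to either state this explicitly or record the standing restriction $\beta_0\in[0,\pi]$, i.e. $B\in[0,B_{\max}]$ modulo the period from (iii) --- which is in fact the regime the paper uses in the subsequent discussion. As written, ``one performs a case analysis \dots and merges the branches into the displayed expression'' asserts a conclusion that the case analysis does not deliver; you should carry it out and flag the needed restriction.
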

With respect to Observation~\ref{ObsLambda} (i), formula~\eqref{T} is applicable to every $k>0$:
$$
\mathcal{T}(k)=\left(1+\i\alpha^2\frac{\cos k\ell-\cos\frac{qBS}{\hbar}}{\sin k\ell}\right)^{-1} \qquad\text{for all $k>0$}\,.
$$
The transmission probabilitity is given as $\P^{(B)}(k)=|\mathcal{T}(k)|^2$, where the notation $\P^{(B)}(k)$ is used to emphasize its dependence on $B$. We have
\begin{equation}\label{P}
\P^{(B)}(k)
=\left[1+\alpha^4\left(\frac{\cos k\ell-\cos\frac{qBS}{\hbar}}{\sin k\ell}\right)^2\right]^{-1}\,.
\end{equation}

If $\alpha\gg1$ (strictly speaking, if $4\alpha^4\gg1$), then $\P^{(B)}(k)$ has sharp peaks attaining $1$ at the points $k_N$ given by~\eqref{peaks}, cf. Corollary~\ref{T coro}. The situation is illustrated in Figure~\ref{Fig: P}.
\begin{figure}[h]
\begin{center}
\includegraphics[width=6.5cm]{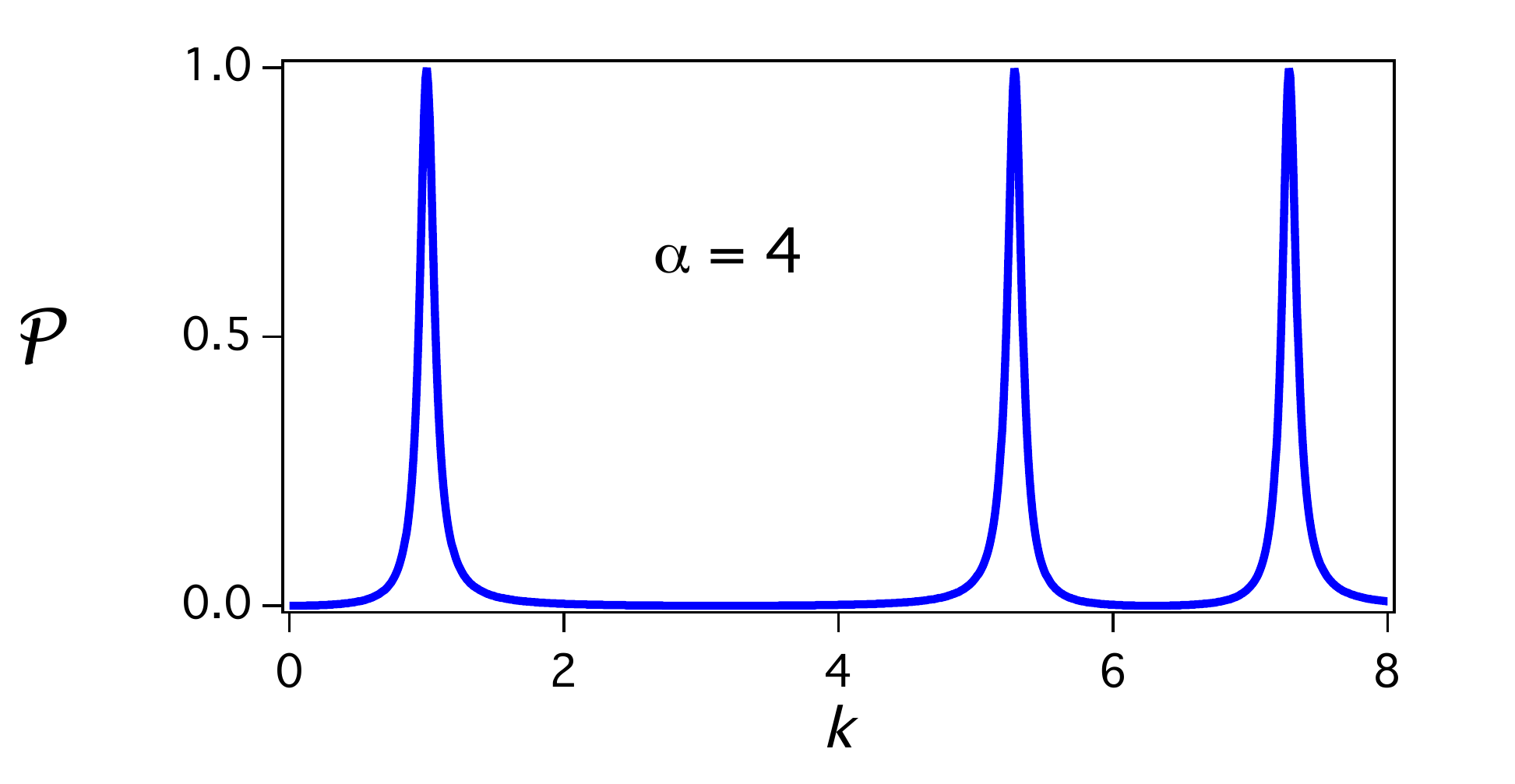}
\caption{The transmission characteristics of the graph depicted in Figure~\ref{Fig: loop} with parameter $\alpha=4$. The transmission probability is plotted for $B=\frac{\hbar}{qS}$. The scale of the particle wavenumber $k$ is chosen such that $k=1$ corresponds to $k=1/\ell$.}
\label{Fig: P}
\end{center}
\end{figure}
Since the positions of the peaks depend on the magnetic field $B$ by a quite simple relation, cf. equation~\eqref{peaks}, the graph can be used for controllable spectral filtering.
Let us assume that the wavenumbers of particles coming in the vertex be in the interval $[0,k_{\max}]$, where $k_{\max}=\pi/\ell$.
Let us define
\begin{equation*}
B_{\max}=\frac{\pi\hbar}{qS}\,.
\end{equation*}
According to Observation~\ref{ObsLambda} (iv), the function $\P^{(B)}(k)$ has a single peak attaining $1$ in the interval $[0,k_{\max}]$. For any $B\in\left[0,B_{\max}\right]$, the peak is located at $k=k_0=\frac{qBS}{\hbar\ell}$.
In other words, if $B$ raises from $0$ to $B_{\max}=\frac{\pi\hbar}{qS}$, the position of the peak of $\P^{(B)}(k)$ shifts from $0$ to $k_{\max}=\pi/\ell$, cf. Figure~\ref{Fig: peak}.
\begin{figure}[h]
\begin{center}
\includegraphics[width=6.5cm]{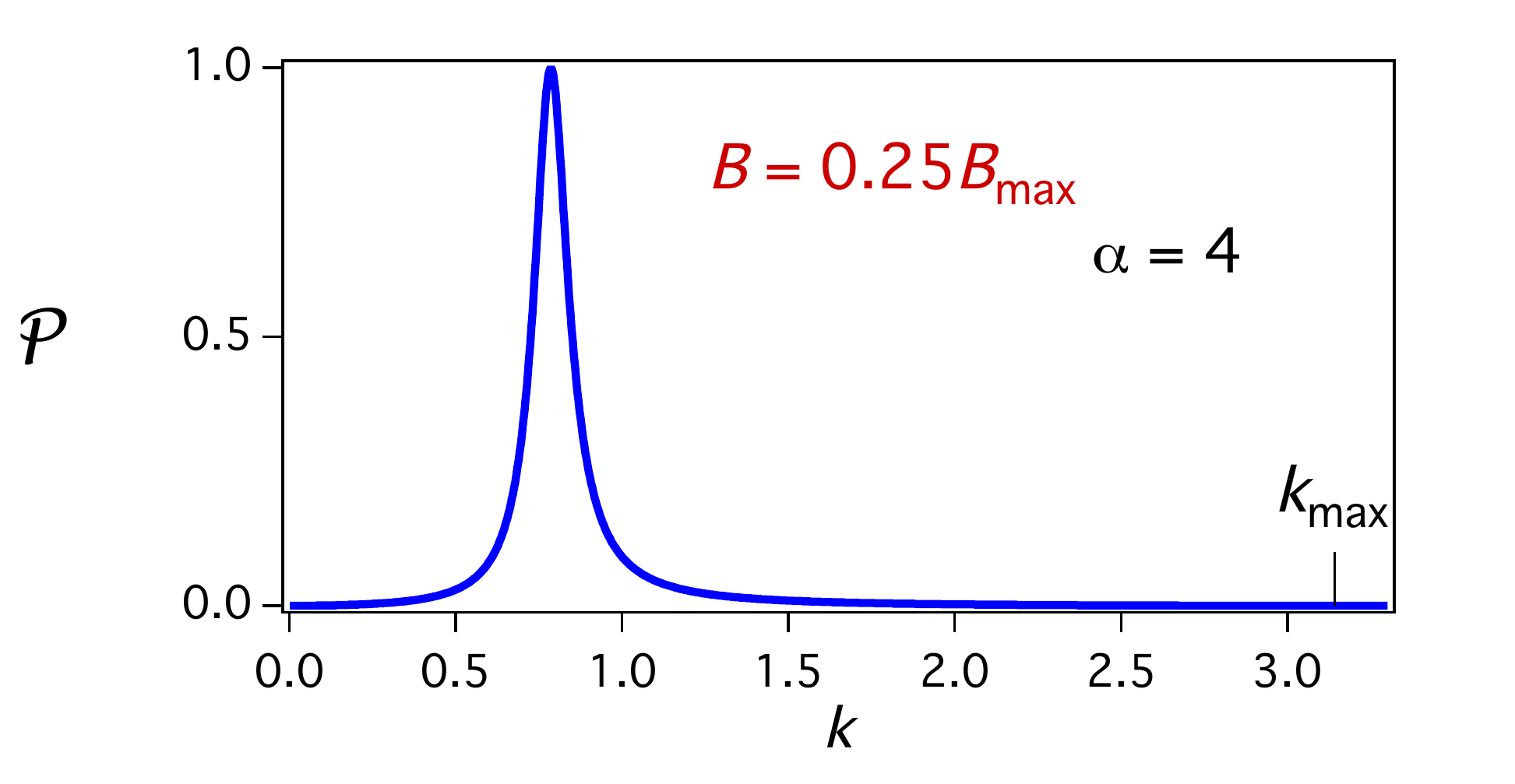} \\
\includegraphics[width=6.5cm]{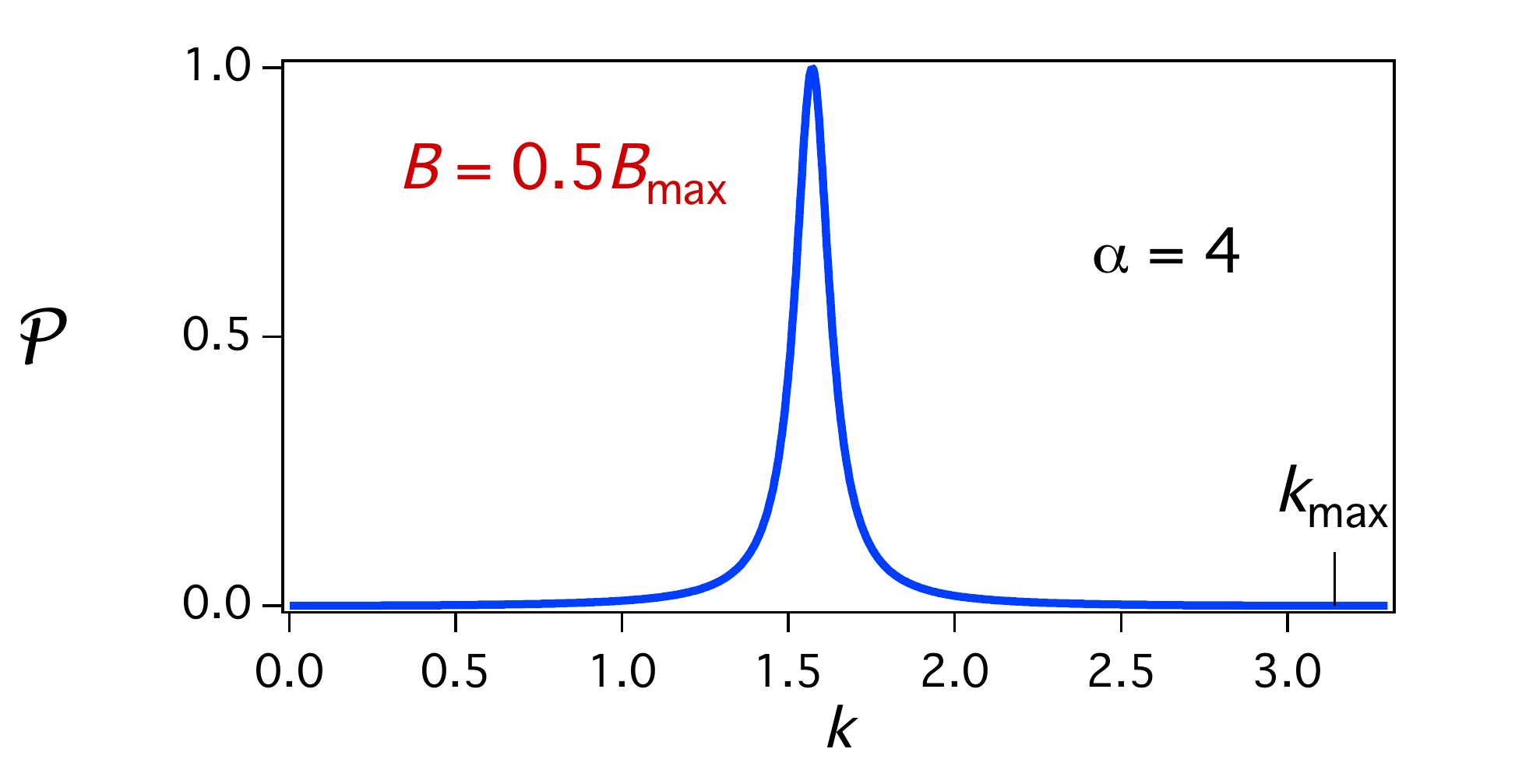}
\caption{The graph depicted in Figure~\ref{Fig: loop} can be used as a spectral filter controllable by a magnetic field. If the magnetic field $B$ ranges over the interval $[0,B_{\max}]$ for $B_{\max}=\frac{\pi\hbar}{qS}$, the passband position ranges over $\left[0,k_{\max}\right]$, where $k_{\max}=\pi/\ell$.}
\label{Fig: peak}
\end{center}
\end{figure}


\section{Physical realization of scale-invariant vertex}\label{Section: Approximation}

Despite the scale-invariant couplings represent nontrivial point interactions with no straightforward physical interpretation, it is known that they can be approximately constructed using several $\delta$ potentials~\cite{CET10,CT10,TC12}.
In this section we demonstrate how to produce the coupling given by boundary conditions~\eqref{bc}. The solution will be obtained by applying the technique from the paper~\cite{TC12}.

The procedure
begins with transforming the scale-invariant boundary conditions in the given vertex of degree $N$ into their $ST$-form,
$$
\left(\begin{array}{cc}
I^{(r)} & T \\
0 & 0
\end{array}\right)\Psi'(0)=
\left(\begin{array}{cc}
0 & 0 \\
-T^* & I^{(N-r)}
\end{array}\right)\Psi(0)
$$
(cf.~\eqref{ST}, recall that $S=0$).
Note that the boundary conditions we consider~\eqref{bc} are already in this form,
and $N=2+n$,
$r=1$, $T=\begin{pmatrix} 1 & \alpha & \cdots & \alpha \end{pmatrix}$. 

In the next step we take $N$ decoupled lines, and for every $i=1,\ldots,r$, we connect the endpoint $v_i$ of the line numbered by $i$ with the endpoints $v_j$ of certain other lines by short lines (``auxiliary links''). In our case we have $r=1$, thus the endpoint $v_1$ shall be connected with (certain of) the endpoints $v_2,v_3,\ldots,v_N$ (Figure~\ref{Fig: approximation}). If the entries of $T$ are indexed in the way $T=\begin{pmatrix} t_{12} & t_{13} & \cdots & t_{1N} \end{pmatrix}$, the connections are constructed according to the following criterion:
\emph{The endpoint $v_1$ is connected with $v_j$ for $j\in\{2,3,\ldots,2+n\}$ by a link if and only if $t_{1j}\neq0$.}
The links between $v_1$ and $v_j$ have the lengths $d/|t_{1j}|$, where $d$ is a common length parameter. The value of $d$ shall be chosen small enough, namely, $d\ll 1/k_{\max}$, where $k_{\max}$ is the maximal wavenumber of particles coming in the vertex.
If all the entries of $T$ are nonnegative, the links are chosen as just plain lines that do not carry any additional potentials~\cite{TC12}. 
\begin{figure}[h]
\begin{center}
\includegraphics[width=6.0cm]{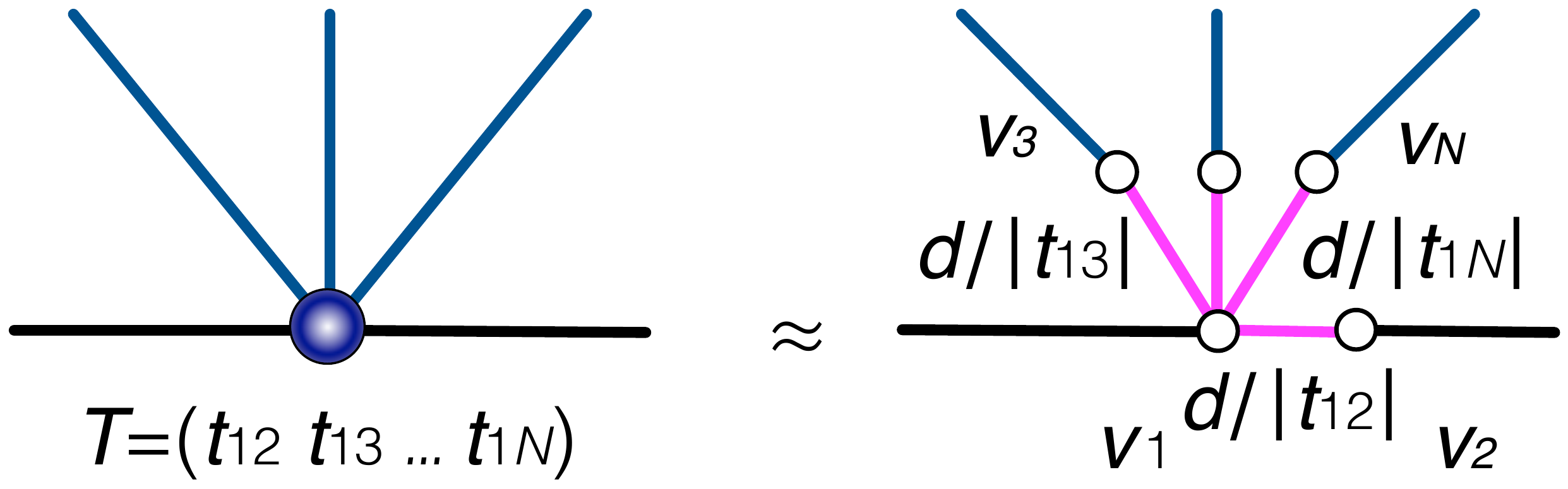}
\caption{Approximative construction of the vertex coupling given by boundary conditions~\eqref{bc}. The vertices $v_1,\ldots,v_N$ support $\delta$ potentials of properly chosen strengths. The length parameter $d$ governs the accuracy of the aproximation, the requested coupling is obtained in the limit $d\to0$.}
\label{Fig: approximation}
\end{center}
\end{figure}

In the endpoints $v_j$, $j=1,\ldots,N$, $\delta$-couplings are imposed. If $r=1$ and the entries of $T$ are all nonnegative, the strengths $\alpha_j$ of the $\delta$ potentials are given by the following formulae~\cite{TC12}:
\begin{align*}
\text{in the vertex $v_1$:}\qquad &\frac{1}{d}\left(\sum_{i=2}^N t_{1i}^2-\sum_{i=2}^N t_{1i}\right)\,;\\
\text{in the vertices $v_j$, $j=2,\ldots,N$:}\qquad &\frac{1}{d}\left(1-t_{1j}\right)\,.
\end{align*}

For $T=\begin{pmatrix} 1 & \alpha & \cdots & \alpha \end{pmatrix}$ we obtain the following arrangement.
\begin{itemize}
\item The link between $v_1$ and $v_2$ is of length $d$,
\item for every $j\in\{3,\ldots,N\}$, the link between $v_1$ and $v_j$ is of length $d/\alpha$.
\end{itemize}
The strength of the $\delta$-couplings in the endpoints $v_1,\ldots,v_N$ are
\begin{align*}
\text{in the vertex $v_1$:}\qquad &(N-2)\frac{\alpha(\alpha-1)}{d}\,;\\
\text{in the vertex $v_2$:}\qquad &0\,;\\
\text{in the vertices $v_j$ for $j=3,\ldots,N$:}\qquad &\frac{1-\alpha}{d}\,.
\end{align*}
Since there is no potential in the vertex $v_2$, the vertex has no importance any more and can be left out from the approximation arrangement, see Figure~\ref{Fig: implement}.
\begin{figure}[h]
\begin{center}
\includegraphics[width=6.0cm]{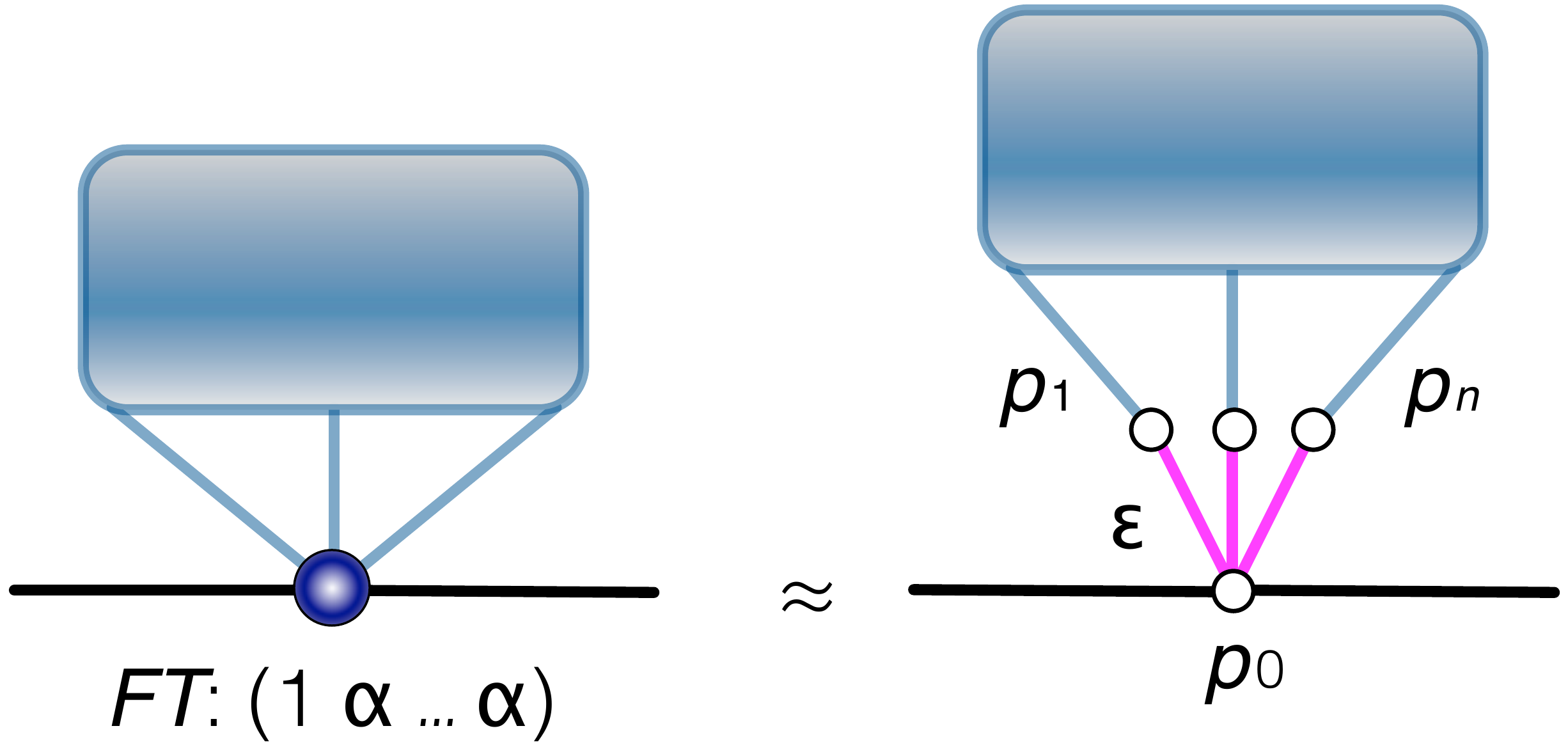}
\caption{A practical realization of the coupling~\eqref{bc} in the vertex $0$ of the graph $\Gamma_{\mathbf{io}}$. The $n$ edges of the graph $\Gamma$ that are incident to $v_0$ are decoupled and their endpoints are attached to a point on the input--output line by short edges of length $\epsilon$. Finally, the $\delta$-couplings, represented by small circles, are imposed in the endpoints.}
\label{Fig: implement}
\end{center}
\end{figure}
Leaving out the vertex $v_2$ allows us to introduce more simplifications.
\begin{itemize}
\item We denote the vertex $v_1$ by $p_0$ and vertices $v_3,\ldots,v_{n+2}$ by $p_1,\ldots,p_{n}$;
\item we introduce the length parameter $\epsilon\equiv d/\alpha$;
\item we denote the strengths of the $\delta$-couplings in vertices $p_j$ by $\alpha_j$ for all $j=0,1,\ldots,n$, i.e.,
\begin{equation}\label{alpha}
\alpha_0=n\frac{\alpha-1}{\epsilon}\,,\qquad \alpha_1=\cdots=\alpha_{n}=\frac{1-\alpha}{\alpha\epsilon}\,.
\end{equation}
\end{itemize}
The implementation of the approximation in the graph $\Gamma_{\mathbf{io}}$ is illustrated in Figure~\ref{Fig: implement}.
The small size limit $\epsilon\to 0$, together with the $\delta$~potentials strengths properly scaled according to the formulae~\eqref{alpha} above, effectively produces the required scale-invariant coupling in vertex $0$, as we show in Theorem~\ref{convergence} below.



\begin{theorem}\label{convergence}
Let $\mathcal{T}_\epsilon(k)$ be the transmission amplitude of the approximating graph depicted in Figure~\ref{Fig: implement}. Let $\sigma_0$ and $\Lambda$ have the same meaning as in Section~\ref{Section: Main}, defined by \eqref{sigma0} and \eqref{Lambda}.
\begin{itemize}
\item[(i)] If $E\in(0,+\infty)\backslash\sigma_0$, then
\begin{equation*}
\lim_{\epsilon\to0}\mathcal{T}_\epsilon(k)=\frac{1}{1+\alpha^2\frac{\Lambda(E)}{2\i k}}\,.
\end{equation*}
\item[(ii)] If $E\in\sigma_0$, then $\lim_{\epsilon\to0}\mathcal{T}_\epsilon(k)=0$.
\end{itemize}
\end{theorem}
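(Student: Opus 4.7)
The plan is to derive the scattering amplitude on the $\epsilon$-graph by a standard vertex-by-vertex analysis and then expand the resulting relations in $\epsilon$. The central observation to be verified is that the $\delta$-strengths $\alpha_0$ and $\alpha_j$ are scaled precisely so that the singular $1/\epsilon$ contributions cancel and the finite remainders reproduce, in the limit $\epsilon\to 0$, the scale-invariant boundary conditions \eqref{bc_sub} at the vertex $0$. Once those limit boundary conditions are available, the formulas in (i) and (ii) follow by repeating the algebra of the proof of Proposition~\ref{Prop. T}.

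First I would put $\psi_-(x)=\e^{\i kx}+\mathcal{R}_\epsilon\e^{-\i kx}$ on the input line, $\psi_+(x)=\mathcal{T}_\epsilon\e^{\i kx}$ on the output line, and on each auxiliary link of length $\epsilon$ the plane-wave expression $\chi_j(x)=u\cos(kx)+k^{-1}\chi_j'(0)\sin(kx)$, where $u=\psi_-(0)=\psi_+(0)=\chi_j(0)$ by continuity at $p_0$; the components $\phi_j$ on the edges of $\Gamma$ satisfy \eqref{Problem 1}--\eqref{Problem 2} with $\phi_j(0)$ and $\phi_j'(0)$ left free. Continuity $\chi_j(\epsilon)=\phi_j(0)$ at $p_j$ determines $\chi_j'(0)=k[\phi_j(0)-u\cos(k\epsilon)]/\sin(k\epsilon)$; substituting this into the $\delta$-coupling condition at $p_j$ with strength $\alpha_j=(1-\alpha)/(\alpha\epsilon)$ and Taylor-expanding in $\epsilon$ yields
\[
\phi_j'(0)=\frac{\phi_j(0)-\alpha u}{\alpha\epsilon}+O(\epsilon),
\]
so boundedness of $\phi_j'(0)$ forces $\phi_j(0)=\alpha u+O(\epsilon)$, which in the limit is \eqref{bc3}. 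An analogous expansion at $p_0$ with strength $\alpha_0=n(\alpha-1)/\epsilon$ shows that the singular contribution $n(\alpha-1)u/\epsilon$ coming from $\sum_j\chi_j'(0)$ cancels $\alpha_0 u$ exactly, leaving the finite relation $-\psi_-'(0)+\psi_+'(0)+\alpha\sum_j\phi_j'(0)=O(\epsilon)$, i.e., \eqref{bc1}; condition \eqref{bc2} is built in as continuity at $p_0$.

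With the limit boundary conditions in hand, part (i) is immediate: when $E\in(0,+\infty)\setminus\sigma_0$ I write $\phi_j=c\tilde\phi_j$ with $(\tilde\phi_j)$ the solution of \eqref{Problem}, and the reduced three-equation linear system from the proof of Proposition~\ref{Prop. T}(i) yields $\mathcal{T}_\epsilon(k)=[1+\alpha^2\Lambda(E)/(2\i k)]^{-1}+o(1)$. For part (ii), when $E\in\sigma_0$ the limit relation $\phi_j(0)=\alpha u$ combined with the nonexistence of a solution of \eqref{Problem} forces $\phi_j(0)=0$, hence $u=0$ and $\mathcal{T}_\epsilon(k)\to 0$. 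The main obstacle I anticipate is to make these formal asymptotics rigorous: one must verify that the finite-dimensional linear system determining the unknowns on the $\epsilon$-graph is invertible for all sufficiently small $\epsilon>0$ and that its solution depends continuously on $\epsilon$ down to the limit. Near $E\in\sigma_0$ extra care is required because the reduced problem becomes singular, but the scalings of $\alpha_0$ and $\alpha_j$ are chosen precisely so that this singularity collapses cleanly to $\mathcal{T}_\epsilon\to 0$.
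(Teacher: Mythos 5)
Your overall strategy --- a plane-wave ansatz on each auxiliary link, continuity plus the $\delta$-conditions at $p_0$ and $p_j$, and reduction to the algebra of Proposition~\ref{Prop. T} via $\phi_j=c\tilde\phi_j$ --- is the same as the paper's. The difference is the order in which you take the limit $\epsilon\to0$, and that is where a genuine gap sits. You first argue formally that the $\epsilon$-dependent vertex conditions converge to \eqref{bc_sub} (e.g., ``boundedness of $\phi_j'(0)$ forces $\phi_j(0)=\alpha u+O(\epsilon)$''), and only then solve the \emph{limit} problem. But boundedness of $\phi_j'(0)$ along the family of scattering solutions is not known a priori --- it is a property of the very solutions you are trying to control, so the step is circular as stated --- and, more importantly, formal convergence of the boundary conditions does not by itself imply convergence of the scattering amplitudes: the amplitude of the limit problem and the limit of the amplitudes must be shown to coincide. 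You flag exactly this (invertibility of the linear system for small $\epsilon$ and continuity of its solution down to $\epsilon=0$) as the main obstacle, but the proposal does not close it.

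The paper closes it by reversing the order of operations: for each fixed $\epsilon>0$ it solves the transfer problem on the link exactly, expressing $\varphi_j(0)$ and $\varphi_j'(0)$ in closed form through $\phi_j(0)$ and $\phi_j'(0)$ (formulas \eqref{varphi_j(0)} and \eqref{varphi_j'(0)}), inserts these into the $\delta$-condition at $p_0$, and obtains an explicit closed-form rational expression for $\mathcal{T}_\epsilon(k)$ with no asymptotic expansion anywhere. The limit $\epsilon\to0$ is then computed on that explicit expression, whose limiting denominator $2\i k+\alpha^2\Lambda(E)$ is nonzero, so no separate invertibility or continuity argument is needed. Your proof becomes correct if you restructure it the same way: keep every relation exact in $\epsilon$, eliminate the link variables, solve the resulting finite linear system for $\mathcal{T}_\epsilon(k)$, and only then let $\epsilon\to0$. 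For part (ii) your argument mirrors the paper's and carries the same mild implicit assumption that the relevant limits of the $\phi_j$ exist.
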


\begin{proof}
We denote the wave function components on the auxiliary short lines by $\varphi_j(x)$ for $x\in[0,\epsilon]$, where $x=0$ corresponds to the point $p_0$ and $x=\epsilon$ to the point $p_j$. Therefore, these components take the form
\begin{equation}\label{varphi_j}
\varphi_j(x)=C_j^+\e^{\i kx}+C_j^-\e^{-\i kx}\,.
\end{equation}
The $\delta$-coupling in the vertex $p_0$ is expressed by the boundary conditions
\begin{gather}
\psi_-(0)=\psi_+(0)=\varphi_j(0) \qquad\text{for all $j=1,\ldots,n$}\,, \label{psi a}\\
-\psi_-'(0)+\psi_+'(0)+\sum_{j=1}^{n}\varphi_j'(0)=\alpha_0\psi_-(0)\,, \qquad\text{where $\alpha_0=n\frac{\alpha-1}{\epsilon}$}\,. \label{psi b}
\end{gather}
The $\delta$-interactions in the points $p_j$ mean
\begin{gather}
\varphi_j\left(\epsilon\right)=\phi_j(0) \qquad\text{for all $j=1,\ldots,n$}\,, \label{varphi a}\\
\phi_j'(0)+\varphi_j'\left(\epsilon\right)=\alpha_j\phi_j(0)\,, \qquad\text{where $\alpha_j=\frac{1-\alpha}{\alpha\epsilon}$}\,. \label{varphi b}
\end{gather}
Our first goal is to express $\varphi_j(0)$ and $\varphi_j'(0)$ in terms of $\phi_j(0)$ and $\phi_j'(0)$. We start by substituting formula~\eqref{varphi_j} into the boundary conditions \eqref{varphi a} and \eqref{varphi b}, which leads to the system
\begin{gather*}
C_j^+\e^{\i k\epsilon}+C_j^-\e^{-\i k\epsilon}=\phi_j(0) \qquad\text{for all $j=1,\ldots,n$}\,, \\
\phi_j'(0)+\i k\left(C_j^+\e^{\i k\epsilon}+C_j^-\e^{-\i k\epsilon}\right)=\frac{1-\alpha}{\alpha\epsilon}\phi_j(0)\,.
\end{gather*}
Hence we obtain $C_j^+$ and $C_j^-$;
\begin{subequations}\label{C_j}
\begin{align}
C_j^+&=\frac{1}{2\i k}\left[\phi_j'(0)+\left(\i k-\frac{1-\alpha}{\alpha\epsilon}\right)\right]\,, \\
C_j^-&=\frac{1}{2\i k}\left[-\phi_j'(0)+\left(\i k+\frac{1-\alpha}{\alpha\epsilon}\right)\right]\,.
\end{align}
\end{subequations}
Equation~\eqref{varphi_j} imply
\begin{equation}\label{varphi(0)}
\varphi_j(0)=C_j^++C_j^- \qquad\text{and}\qquad \varphi_j'(0)=\i k(C_j^+-C_j^-)\,.
\end{equation}
We substitute the expressions~\eqref{C_j} into equations~\eqref{varphi(0)}, which gives
\begin{align}
\varphi_j(0)&=-\frac{\sin k\epsilon}{k}\phi_j'(0)+\left(\cos k\epsilon+\frac{1-\alpha}{\alpha\epsilon k}\sin k\epsilon\right)\phi_j(0)\,, \label{varphi_j(0)} \\
\varphi_j'(0)&=\frac{\cos k\epsilon}{\i k}\phi_j'(0)+\left(-\i\sin k\epsilon+\i\frac{1-\alpha}{\alpha\epsilon k}\cos k\epsilon\right)\phi_j(0)\,. \label{varphi_j'(0)}
\end{align}

Having expressed $\varphi_j(0)$ and $\varphi_j'(0)$, we use formulas \eqref{varphi_j(0)} and \eqref{varphi_j'(0)} in boundary conditions \eqref{psi a} and \eqref{psi b}. Hence we get the system
\begin{equation}\label{bc apr1}
\psi_-(0)=\psi_+(0)=-\frac{\sin k\epsilon}{k}\phi_j'(0)+\left(\cos k\epsilon+\frac{1-\alpha}{\alpha\epsilon k}\sin k\epsilon\right)\phi_j(0) \qquad\text{for all $j=1,\ldots,n$}
\end{equation}
and
\begin{equation}\label{bc apr2}
-\psi_-'(0)+\psi_+(0)+\sum_{j=1}{n}\left[\cos k\epsilon\cdot\phi_j'(0)+\left(k\sin k\epsilon-\frac{1-\alpha}{\alpha\epsilon}\cos k\epsilon\right)\phi_j(0)\right]=n\frac{\alpha-1}{\epsilon}\psi_-(0)\,.
\end{equation}

From now on we proceed similarly as in the proof of Proposition~\ref{Prop. T}.

(i)\quad If $E\in(0,+\infty)\backslash\sigma_0$, the problem~\eqref{Problem} has a solution $\Phi=(\tilde{\phi}_1,\ldots,\tilde{\phi}_n)^T$. We set
\begin{gather*}
\psi_-(x)=\e^{\i kx}+\mathcal{R}_\epsilon(k)\e^{-\i kx}\,,\qquad \psi_+(x)=\mathcal{T}_\epsilon(k)\e^{\i kx}\,, \\
\phi_j=c\tilde{\phi}_j \quad\text{for all $j=1,\ldots,n$}\,,
\end{gather*}
where $c\in\C$ is a constant to be specified later,
and
$$
\varphi_j=C_j^+\e^{\i kx}+C_j^-\e^{-\i kx} \quad\text{for $C_j^+,C_j^-$ given by equations~\eqref{C_j}}\,.
$$
These functions obey the boundary conditions in each $v\in V_\Gamma\backslash\{v_0\}$ due to the assumptions~\eqref{Problem} (recall that $\Gamma$ denotes the ``attached'' graph), as well as the boundary conditions corresponding to the $\delta$-interactions in the vertices $p_1,\ldots,p_{n}$. Therefore, we require these functions to satisfy also the boundary conditions \eqref{bc apr1} and \eqref{bc apr2} in the vertex $p_0$, whence we find $\mathcal{T}_\epsilon(k)$. Using the properties of $\Phi$ (cf.~\eqref{Problem}), we can rewrite the boundary conditions \eqref{bc apr1} and \eqref{bc apr2} as
\begin{gather*}
1+\mathcal{R}_\epsilon(k)=\mathcal{T}_\epsilon(k)=-\frac{\sin k\epsilon}{k}c\tilde{\phi}_j'(0)+\left(\cos k\epsilon+\frac{1-\alpha}{\alpha\epsilon k}\sin k\epsilon\right)c\,; \\
-\i k(1-\mathcal{R}_\epsilon(k))+\i k\mathcal{T}_\epsilon(k)+\cos k\epsilon \cdot c\Lambda(E)+n\left(k\sin k\epsilon-\frac{1-\alpha}{\alpha\epsilon}\cos k\epsilon\right)c=n\frac{\alpha-1}{\epsilon}(1+\mathcal{R}_\epsilon(k))\,.
\end{gather*}
This system yields
$$
\mathcal{T}_\epsilon(k)=\frac{2\i k\left[-\frac{\sin k\epsilon}{k}\frac{\Lambda(E)}{n}+\left(\cos k\epsilon+\frac{1-\alpha}{\alpha\epsilon k}\sin k\epsilon\right)\right]}{\cos k\epsilon\cdot\Lambda(E)+n\left(k\sin k\epsilon-\frac{1-\alpha}{\alpha\epsilon}\cos k\epsilon\right)+\left(2\i k-n\frac{\alpha-1}{\epsilon}\right)\left[-\frac{\sin k\epsilon}{k}\frac{\Lambda(E)}{n}+\left(\cos k\epsilon+\frac{1-\alpha}{\alpha\epsilon k}\sin k\epsilon\right)\right]}\,.
$$
A simple calculation gives
$$
\lim_{\epsilon\to0}\mathcal{T}_\epsilon(k)=\frac{2\i k}{2\i k+\alpha^2\Lambda(E)}\,,
$$
i.e., $\lim_{\epsilon\to0}\mathcal{T}_\epsilon(k)=\mathcal{T}(k)$ for $\mathcal{T}(k)$ given by \eqref{T}.

(ii)\quad If $E\in\sigma_0$, let $\Psi_\epsilon=(\e^{\i kx}+\mathcal{R}_\epsilon(k)\e^{-\i kx},\mathcal{T}_\epsilon(k)\e^{\i kx},\phi_1,\ldots,\phi_{n},\varphi_1,\ldots,\varphi_{n})^T$ be the final-state wave function on the approximating graph. Then $\Phi_\epsilon:=\left(\phi_1,\ldots,\phi_{n}\right)^T$ obviously satisfies~\eqref{Problem 1} and \eqref{Problem 2} for every $\epsilon$. Moreover, $\Psi_\epsilon$ obeys the boundary conditions~\eqref{bc apr1}, hence
\begin{equation}\label{lim ii}
\lim_{\epsilon\to0}\mathcal{T}_\epsilon(k)=\lim_{\epsilon\to0}\phi_1=\cdots=\lim_{\epsilon\to0}\phi_{n}\,.
\end{equation}
To sum up, the function $\lim_{\epsilon\to0}\Phi_\epsilon=(\tilde{\phi}_1,\ldots,\tilde{\phi}_{n})^T$ satisfies the conditions \eqref{Problem 1}, \eqref{Problem 2} and $\tilde{\phi}_j=c$ for a certain value $c$. However, since we assume $E\in\sigma_0$, the problem~\eqref{Problem} cannot have a solution, which implies $c=0$. Therefore, $\lim_{\epsilon\to0}\mathcal{T}_\epsilon(k)=0$ according to~\eqref{lim ii}.

\end{proof}

Theorem~\ref{convergence} means that the transmission amplitude $\mathcal{T}_\epsilon(k)$ on the approximating graph, sketched in Figure~\ref{Fig: implement}, satisfies
$\lim_{\epsilon\to0}\mathcal{T}_\epsilon(k)=\mathcal{T}(k)$ for all $k>0$.

\begin{remark}
Besides the procedure described above there exists an alternative approximate construction of exotic graph vertices, which is based on the use of tubular networks built over the graph~\cite{EP12}.
\end{remark}

\section{Band-stop filter}\label{Section: Inverse}

The idea, used in Section~\ref{Section: Main} for the construction of a band-pass spectral filter, can be extended. In this section we apply it to design a band-stop filter. It is built upon the same graph $\Gamma_{\mathbf{io}}=(\{0\}\cup V_\Gamma\backslash\{v_0\},E_\Gamma\cup\{\mathbf{i},\mathbf{o}\})$ as before, but the scale-invariant coupling~\eqref{bc} in the vertex $0$ is replaced by
\begin{widetext}
\begin{equation}\label{bc inv}
\begin{pmatrix}
1 & 0 & \alpha & \cdots & \alpha \\
0 & 1 & \alpha & \cdots & \alpha \\
0 & 0 & 0 & \cdots & 0 \\
\vdots & \vdots & \vdots &  & \vdots \\
0 & 0 & 0 & \cdots & 0
\end{pmatrix}
\begin{pmatrix}
-\psi_-'(0) \\
\psi_+'(0) \\
\phi_1'(0) \\
\vdots \\
\phi_{n}'(0)
\end{pmatrix}
=
\begin{pmatrix}
0 & 0 & 0 & \cdots & 0 \\
0 & 0 & 0 & \cdots & 0 \\
-\alpha & -\alpha & 1 & \cdots & 0 \\
\vdots & \vdots &  & \ddots &  \\
-\alpha & -\alpha & 0 & \cdots & 1
\end{pmatrix}
\begin{pmatrix}
\psi_-(0) \\
\psi_+(0) \\
\phi_1(0) \\
\vdots \\
\phi_{n}(0)
\end{pmatrix}\,,
\end{equation}
\end{widetext}
where $\alpha>0$ is a parameter.

Note that the boundary conditions~\eqref{bc inv} are related to the boundary conditions~\eqref{bc} by the duality relations
\begin{equation}\label{transformation}
\begin{array}{ccc}
\text{b. c. \eqref{bc}} & & \text{b. c. \eqref{bc inv}}
\\ \hline
-\psi_-'(0)+\psi_+'(0) & \quad\longleftrightarrow\quad & -\psi_-'(0)=\psi_+'(0)\,, \\
\psi_-(0)=\psi_+(0) & \quad\longleftrightarrow\quad & \psi_-(0)+\psi_+(0)\,,
\end{array}
\end{equation}
which is a generalisation of the $\delta$-$\delta^\prime$ duality found earlier~\cite{CS99,CFT01}.
If we express the transformation relations~\eqref{transformation} in terms of the scattering amplitudes $\mathcal{R}(k)$ and $\mathcal{T}(k)$ using expressions~\eqref{psi_io}, we obtain the system of equations
\begin{equation}\label{system duality}
\begin{array}{ccc}
\i k(-1+\mathcal{R}_\mathrm{pass}(k)+\mathcal{T}_\mathrm{pass}(k)) & \quad=\quad & \i kC(-1+\mathcal{R}_\mathrm{stop}(k))=\i kC\mathcal{T}_\mathrm{stop}(k)\,, \\
1+\mathcal{R}_\mathrm{pass}(k)=\mathcal{T}_\mathrm{pass}(k) & \quad=\quad & C(1+\mathcal{R}_\mathrm{stop}(k)+\mathcal{T}_\mathrm{stop}(k))\,,
\end{array}
\end{equation}
where $\mathcal{T}_\mathrm{pass}(k)$ is the transmission amplitude of the band-pass filter, derived in Proposition~\ref{Prop. T}, $\mathcal{T}_\mathrm{stop}(k)$ is the transmission amplitude of the band-stop filter constructed for the boundary conditions~\eqref{bc inv}, and $C$ is a coefficient which is needed to harmonize the normalizations.
Solving the system~\eqref{system duality}, we get a formula relating 
the transmission amplitude of the band-pass filter from Section~\ref{Section: Main} with the transmission amplitude of the band-stop filter introduced in this section,
\begin{equation}\label{duality}
2\left(1-\frac{1}{\mathcal{T}_\mathrm{pass}(k)}\right) = \frac{1}{2\left(1+\frac{1}{\mathcal{T}_\mathrm{stop}(k)}\right)}\,.
\end{equation}

Recall that $\lim_{\alpha\to0}\mathcal{T}_\mathrm{pass}(k)\in\{0,1\}$ for all $k>0$, cf. Proposition~\ref{Prop. T}. The duality relation~\eqref{duality} then implies
\begin{align*}
\lim_{\alpha\to0}\mathcal{T}_\mathrm{pass}(k)=1 \quad&\Rightarrow\quad \lim_{\alpha\to0}\mathcal{T}_\mathrm{stop}(k)=0\,; \\
\lim_{\alpha\to0}\mathcal{T}_\mathrm{pass}(k)=0 \quad&\Rightarrow\quad \lim_{\alpha\to0}\mathcal{T}_\mathrm{stop}(k)=-1\,. \\
\end{align*}
This observation together with Corollary~\ref{T coro} leads to the following result:
\begin{proposition}
For every $k>0$, the transmission probability of the graph with the boundary conditions~\eqref{bc inv} in the vertex $0$ obeys
$$
\lim_{\alpha\to\infty}\P(k)=
\left\{\begin{array}{cl}
0 & \text{if $E=\frac{\hbar^2k^2}{2m}\in\sigma(H_{\Gamma})\backslash\sigma_0$}\,, \\
1 & \text{otherwise}\,.
\end{array}\right.
$$
\end{proposition}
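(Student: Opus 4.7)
Although the duality relation~\eqref{duality} derived above already points to the result, my plan is to give a direct proof in exact analogy with Proposition~\ref{Prop. T}, rather than rely on the auxiliary normalisation constant $C$ in~\eqref{system duality} whose role was only motivated heuristically.

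First I would unfold~\eqref{bc inv} into three scalar relations: subtracting the first two rows gives $\psi_+'(0)=-\psi_-'(0)$; the first row alone then reads $\alpha\sum_{j=1}^{n}\phi_j'(0)=\psi_-'(0)$; and the remaining $n$ rows force $\phi_1(0)=\cdots=\phi_n(0)=\alpha\bigl(\psi_-(0)+\psi_+(0)\bigr)=:c$. If $c\ne 0$, then $(\phi_1,\ldots,\phi_n)/c$ is a solution of problem~\eqref{Problem}.

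Next I would insert the scattering ansatz~\eqref{psi_io}. The relation $\psi_+'(0)=-\psi_-'(0)$ becomes $\mathcal{R}(k)=\mathcal{T}(k)+1$, whence $c=2\alpha\bigl(1+\mathcal{T}(k)\bigr)$. In the case $E\in(0,+\infty)\setminus\sigma_0$, Observation~\ref{Obs. sigma0} provides a unique solution of~\eqref{Problem} with $\sum_j\tilde\phi_j'(0)=\Lambda(E)$, hence $\sum_j\phi_j'(0)=c\Lambda(E)$. Substituting into $\alpha\sum_j\phi_j'(0)=\psi_-'(0)=-\i k\mathcal{T}(k)$ reduces everything to a single linear equation for $\mathcal{T}(k)$, solved by
\[
\mathcal{T}(k)=\frac{-2\alpha^{2}\Lambda(E)}{2\i k+2\alpha^{2}\Lambda(E)}\,.
\]
In the case $E\in\sigma_0$ the non-existence of a solution of~\eqref{Problem} forces $c=0$, i.e.\ $\mathcal{T}(k)=-1$ for every $\alpha>0$.

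Finally I would take $\alpha\to\infty$ and invoke Observation~\ref{Obs. Lambda=0}. When $E\in\sigma(H_\Gamma)\setminus\sigma_0$ one has $\Lambda(E)=0$, so the closed form yields $\mathcal{T}(k)=0$ already at every finite $\alpha$ and thus $\P(k)=0$. When $E\in(0,+\infty)\setminus\sigma(H_\Gamma)$ one has $\Lambda(E)\ne 0$, so $\mathcal{T}(k)\to -1$; together with the $E\in\sigma_0$ case this gives $\P(k)\to 1$ whenever $E\notin\sigma(H_\Gamma)\setminus\sigma_0$. The main subtlety I anticipate is the $E\in\sigma_0$ branch, where one must carefully argue that $c$ must vanish (otherwise $\phi_j/c$ would be a forbidden solution of~\eqref{Problem}); apart from that the calculation is entirely mechanical.
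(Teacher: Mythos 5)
Your proof is correct, but it takes a genuinely different route from the paper. The paper obtains this proposition by exhibiting the duality~\eqref{transformation} between the couplings~\eqref{bc} and~\eqref{bc inv}, translating it into the system~\eqref{system duality} with an auxiliary normalisation constant $C$, eliminating $C$ to get the relation~\eqref{duality} between $\mathcal{T}_\mathrm{pass}$ and $\mathcal{T}_\mathrm{stop}$, and then feeding in Corollary~\ref{T coro}; the explicit formula of Proposition~\ref{Prop. T inv} is only stated afterwards as a consequence. You instead redo the scattering computation of Proposition~\ref{Prop. T} directly for~\eqref{bc inv}: your unfolding of the rows, the relation $\mathcal{R}(k)=\mathcal{T}(k)+1$, the identification $c=2\alpha(1+\mathcal{T}(k))$, and the treatment of the $\sigma_0$ branch via the non-solvability of~\eqref{Problem} are all sound, and this route has the advantage of being self-contained (it avoids the somewhat heuristic constant $C$) while delivering Proposition~\ref{Prop. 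T inv} as a by-product; the paper's route is shorter given Corollary~\ref{T coro} and makes the $\delta$--$\delta'$-type complementarity of the two filters structurally transparent. Two small points: your closed form carries a spurious factor of $2$ on $\i k$ --- solving $\i k\mathcal{T}(k)+2\alpha^2\bigl(1+\mathcal{T}(k)\bigr)\Lambda(E)=0$ gives
\begin{equation*}
\mathcal{T}(k)=\frac{-2\alpha^{2}\Lambda(E)}{\i k+2\alpha^{2}\Lambda(E)}\,,
\end{equation*}
in agreement with~\eqref{T inv} --- but this is immaterial for the limit, which depends only on whether $\Lambda(E)$ vanishes; and the solution of~\eqref{Problem} need not be unique for $E\in\sigma(H_\Gamma)\backslash\sigma_0$, only $\Lambda(E)$ is well defined (and equal to $0$ there by Observation~\ref{Obs. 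Lambda=0}), which is all your argument actually uses.
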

Consequently, the graph studied in Section~\ref{Section: Main} and the graph studied in this section have complementary transmission characteristics. The device based on the scale-invariant vertex coupling~\eqref{bc inv} works as a band-stop filter with stopbands at $E\in\sigma(H_{\Gamma})\backslash\sigma_0$.

The duality relation~\eqref{duality} together with the result of Proposition~\ref{Prop. T} also allows one to easily find the formula for the transmission amplitude $\mathcal{T}_\mathrm{stop}(k)$ on the graph $\Gamma_{\mathbf{io}}$ with boundary conditions~\eqref{bc inv} in the vertex $0$.

\begin{proposition}\label{Prop. T inv}
Let $\mathcal{T}(k)$ be the amplitude of the transmission to the output line for the boundary conditions~\eqref{bc inv} and for an incoming particle of energy $E=\frac{\hbar^2k^2}{2m}$. Let $\sigma_0$ and $\Lambda$ have the same meaning as in Section~\ref{Section: Main}, cf.~\eqref{sigma0} and~\eqref{Lambda}.
It holds:
\begin{itemize}
\item[(i)] If $E=\frac{\hbar^2k^2}{2m}\in(0,+\infty)\backslash\sigma_0$, then
\begin{equation}\label{T inv}
\mathcal{T}(k)=-\frac{\Lambda(E)}{\Lambda(E)+\frac{\i k}{2\alpha^2}}\,.
\end{equation}
\item[(ii)] If $E=\frac{\hbar^2k^2}{2m}\in\sigma_0$, then $\mathcal{T}(k)=-1$.
\end{itemize}
\end{proposition}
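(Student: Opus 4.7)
The plan is to derive both formulas directly from the duality relation~\eqref{duality} established earlier in this section, combined with the explicit expression for $\mathcal{T}_\mathrm{pass}(k)$ provided by Proposition~\ref{Prop. T}. The core of the proof is therefore algebraic rather than analytic, and it mirrors the two-case split already used for the pass filter.

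For part~(i), assume $E\notin\sigma_0$, so that $\mathcal{T}_\mathrm{pass}(k)=(1+\alpha^2\Lambda(E)/(2\i k))^{-1}$ is well-defined and nonzero. First I compute $1-1/\mathcal{T}_\mathrm{pass}(k)=-\alpha^2\Lambda(E)/(2\i k)$, so that the left-hand side of~\eqref{duality} becomes $-\alpha^2\Lambda(E)/(\i k)$. Equating this with the right-hand side and inverting gives $2\bigl(1+1/\mathcal{T}_\mathrm{stop}(k)\bigr)=-\i k/(\alpha^2\Lambda(E))$, and a one-line rearrangement produces exactly the stated formula $\mathcal{T}(k)=-\Lambda(E)/\bigl(\Lambda(E)+\i k/(2\alpha^2)\bigr)$. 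The case $\Lambda(E)=0$ is covered without modification, since then the right-hand side of the formula is $0$, as it should be by complementarity with the pass filter.

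For part~(ii), relation~\eqref{duality} becomes formally singular when $\mathcal{T}_\mathrm{pass}(k)=0$, so I would instead argue directly, in parallel with part~(ii) of Proposition~\ref{Prop. T}. Writing~\eqref{bc inv} componentwise yields the three conditions $\phi_j(0)=\alpha(\psi_-(0)+\psi_+(0))$ for every $j$, $\psi_-'(0)+\psi_+'(0)=0$, and $\psi_-'(0)=\alpha\sum_j\phi_j'(0)$. For a final-state wave function, $\Phi=(\phi_1,\ldots,\phi_n)^T$ satisfies~\eqref{Problem 1}--\eqref{Problem 2} with the common boundary value $\alpha(\psi_-(0)+\psi_+(0))$. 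Since $E\in\sigma_0$ means that problem~\eqref{Problem} has no solution, that common value must vanish, giving $1+\mathcal{R}(k)+\mathcal{T}(k)=0$. Combined with $\psi_-'(0)+\psi_+'(0)=0$, which rewrites as $1-\mathcal{R}(k)+\mathcal{T}(k)=0$ via~\eqref{psi_io}, the system solves immediately to $\mathcal{R}(k)=0$ and $\mathcal{T}(k)=-1$.

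The only point requiring care is the passage in part~(ii) from $E\in\sigma_0$ to $\psi_-(0)+\psi_+(0)=0$: if the sum were nonzero, rescaling by $1/[\alpha(\psi_-(0)+\psi_+(0))]$ would convert $\Phi$ into a solution of~\eqref{Problem}, contradicting the characterization~\eqref{sigma0} of $\sigma_0$. This rescaling step is exactly the mechanism already used in part~(ii) of Proposition~\ref{Prop. T}, so it transfers verbatim, and I do not expect any genuinely new obstacle in the proof.
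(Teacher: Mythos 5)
Your proposal is correct and follows the route the paper intends: part (i) is obtained by substituting the formula of Proposition~\ref{Prop. T} into the duality relation~\eqref{duality} and rearranging, which is exactly what the text preceding the proposition prescribes, and your algebra checks out against~\eqref{T inv}. Your direct boundary-condition argument for part (ii) --- where the duality relation is formally singular because $\mathcal{T}_\mathrm{pass}(k)=0$ --- is a sound, slightly more self-contained variant of the rescaling argument of Proposition~\ref{Prop. T}(ii), and it yields $\mathcal{R}(k)=0$, $\mathcal{T}(k)=-1$ as required.
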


\section{Combined filtering}\label{Section: Switch}

In this section we construct a spectral filtering device with two outputs that combines the properties of the generic band-pass spectral filter studied in Section~\ref{Section: Main} and of the band-stop filter described in Section~\ref{Section: Inverse}. The device can serve as a quantum spectral separator, or a switch.

\begin{figure}[h]
\begin{center}
\includegraphics[width=4.2cm]{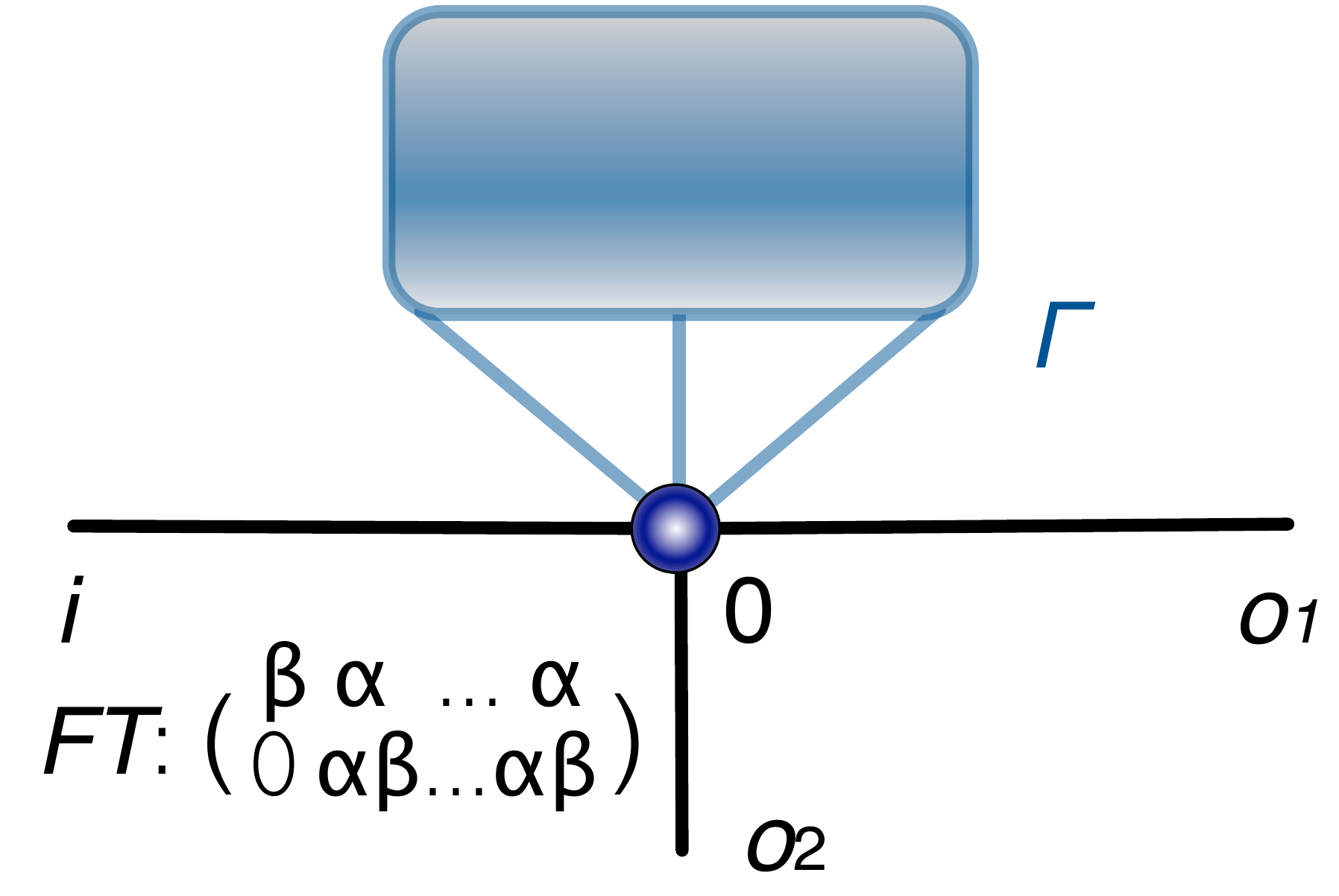}
\caption{A schematic illustration of the quantum spectral separator.}
\label{Fig: switch}
\end{center}
\end{figure}
Let $\Gamma$ be a quantum graph having the properties introduced in section~\ref{Section: Main}. In particular, there exists a vertex $v_0\in V_\Gamma$ with free boundary conditions. Let $\Gamma_{\mathbf{ioo}}$ be the quantum graph constructed from the graph $\Gamma$ by attaching an input half line $\mathbf{i}$ and two output half lines $\mathbf{o1},\mathbf{o2}$ to the vertex $v_0$, see Figure~\ref{Fig: switch}.
We denote the vertex incident to $\mathbf{i,o1,o2}$ by $0$, similarly as in Section~\ref{Section: Main}, hence $\Gamma_{\mathbf{ioo}}=(\{0\}\cup V_\Gamma\backslash\{v_0\},E_\Gamma\cup\{\mathbf{i},\mathbf{o}_1,\mathbf{o}_2\})$. The wave function components on the input half-line and on the output half lines will be denoted by $\psi_-$ and $\psi_1,\psi_2$, respectively.
The filtering function of the graph $\Gamma_{\mathbf{ioo}}$ relies on a scale-invariant coupling in the vertex $0$, described by the following boundary conditions:
\begin{widetext}
\begin{equation}\label{bc switch}
\begin{pmatrix}
1 & 0 & \beta & \alpha & \cdots & \alpha \\
0 & 1 & 0 & \alpha\beta & \cdots & \alpha\beta \\
0 & 0 & 0 & 0 & \cdots & 0 \\
0 & 0 & 0 & 0 & \cdots & 0 \\
\vdots &  &  &  &  & \vdots \\
0 & 0 & 0 & 0 & \cdots & 0
\end{pmatrix}
\begin{pmatrix}
-\psi_-'(0) \\
\psi_1'(0) \\
\psi_2'(0) \\
\phi_1'(0) \\
\vdots \\
\phi_{n}'(0)
\end{pmatrix}
=
\begin{pmatrix}
0 & 0 & 0 & 0 & \cdots & 0 \\
0 & 0 & 0 & 0 & \cdots & 0 \\
-\beta & 0 & 1 & 0 & \cdots & 0 \\
-\alpha & -\alpha\beta & 0 & 1 & \cdots & 0 \\
\vdots & \vdots & \vdots &  & \ddots &  \\
-\alpha & -\alpha\beta & 0 & 0 & \cdots & 1
\end{pmatrix}
\begin{pmatrix}
\psi_-(0) \\
\psi_1(0) \\
\psi_2(0) \\
\phi_1(0) \\
\vdots \\
\phi_{n}(0)
\end{pmatrix}\,.
\end{equation}
\end{widetext}
Values $\alpha>0$ and $\beta>0$ are parameters of the coupling.

The wave function component $\psi_-$ is a superposition of the incoming and the reflected wave, and the components $\psi_1$ and $\psi_2$ represent outgoing waves, hence
\begin{subequations}\label{psi_-12}
\begin{align}
\psi_-(x)&=\e^{ikx}+\mathcal{R}(k)\e^{-ikx}\,, \label{psi_-} \\
\psi_1(x)&=\mathcal{T}_1(k)\e^{ikx}\,, \label{psi_1} \\
\psi_2(x)&=\mathcal{T}_2(k)\e^{ikx}\,, \label{psi_2}
\end{align}
\end{subequations}
where $\mathcal{R}(k)$ is the reflection amplitude and $\mathcal{T}_1(k),\mathcal{T}_2(k)$ are the sought transmission amplitudes.
When we substitute expressions~\eqref{psi_-12} into boundary conditions~\eqref{bc switch}, we obtain the set of conditions
\begin{subequations}\label{bc_sub switch}
\begin{gather}
\i k(-1+\mathcal{R}(k)+\beta\mathcal{T}_2)+\alpha\sum_{j=1}^{n}\phi_j'(0)=0\,; \label{bc1 switch} \\
\i k(\mathcal{T}_1(k))+\alpha\beta\sum_{j=1}^{n}\phi_j'(0)=0\,; \label{bc2 switch} \\
\beta(1+\mathcal{R}(k))=\mathcal{T}_2(k)\,; \label{bc3 switch} \\
\alpha(1+\mathcal{R}(k))+\alpha\beta\mathcal{T}(k)=\phi_j(0) \qquad\text{for all $j=1,\ldots,n$}\,. \label{bc4 switch}
\end{gather}
\end{subequations}
Let $\Lambda$ be the Dirichlet-to-Neumann function for the graph $\Gamma$, cf.~\eqref{Lambda}, and let $\sigma_0$ have the meaning introduced in~\eqref{sigma0}. The transmission amplitudes $\mathcal{T}_1(k)$ and $\mathcal{T}_2(k)$ on the graph $\Gamma_{\mathbf{ioo}}$ with boundary conditions~\eqref{bc switch} in the vertex $0$ can be obtained by solving the system of equations~\eqref{bc_sub switch}. The result is summarized in Proposition~\ref{Prop. T switch}.

\begin{proposition}\label{Prop. T switch}
When a particle of energy $E=\frac{\hbar^2k^2}{2m}$ comes in the vertex $0$, the transmission amplitudes to the output lines are given by the following formulae:
\begin{itemize}
\item[(i)] If $E\in(0,+\infty)\backslash\sigma_0$, then
\begin{subequations}\label{T switch}
\begin{align}
&\mathcal{T}_1(k)=\frac{-2\alpha^2\Lambda(E)}{\frac{\alpha^2\Lambda(E)}{\beta}+\left(\frac{1}{\beta}+\beta\right)\left(\i k+\beta^2\alpha^2\Lambda(E)\right)}\,, \\
&\mathcal{T}_2(k)=\frac{2}{\frac{\alpha^2\Lambda(E)}{\beta\left(\i k+\beta^2\alpha^2\Lambda(E)\right)}+\frac{1}{\beta}+\beta}\,.
\end{align}
\end{subequations}
\item[(ii)] If $E\in\sigma_0$, then
$$
\mathcal{T}_1(k)=\frac{-2}{\frac{1}{\beta}+\beta+\beta^3}\,, \qquad \mathcal{T}_2(k)=\frac{2\beta^2}{\frac{1}{\beta}+\beta+\beta^3}\,.
$$
\end{itemize}
\end{proposition}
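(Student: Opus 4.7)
The plan is to follow the proof of Proposition~\ref{Prop. T} essentially line for line, since the boundary conditions at the vertex~$0$ have already been rewritten in~\eqref{bc_sub switch} as four scalar equations in the four unknowns $\mathcal{R}(k)$, $\mathcal{T}_1(k)$, $\mathcal{T}_2(k)$ and the trace/normal-derivative data of the internal component $\Phi=(\phi_1,\ldots,\phi_n)^T$. The whole task therefore reduces to solving this linear system in the two regimes.

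For case (i), assuming $E\notin\sigma_0$, let $\tilde{\Phi}=(\tilde{\phi}_1,\ldots,\tilde{\phi}_n)^T$ be the solution of the problem~\eqref{Problem}. I would use the ansatz
$$
\Psi=\bigl(\e^{\i kx}+\mathcal{R}\e^{-\i kx},\ \mathcal{T}_1\e^{\i kx},\ \mathcal{T}_2\e^{\i kx},\ c\tilde{\phi}_1,\ldots,c\tilde{\phi}_n\bigr)^T,
$$
with $c\in\C$ to be determined, exactly as in Proposition~\ref{Prop. T}. By the construction of $\tilde{\Phi}$ the equality $H\Psi=E\Psi$ holds and all boundary conditions in $V_\Gamma\setminus\{v_0\}$ are automatic. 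Using $\tilde{\phi}_j(0)=1$ and $\sum_j\tilde{\phi}_j'(0)=\Lambda(E)$, equations~\eqref{bc1 switch}--\eqref{bc4 switch} become a $4\times4$ linear system in $\mathcal{R},\mathcal{T}_1,\mathcal{T}_2,c$. I would use~\eqref{bc3 switch} to eliminate $\mathcal{T}_2=\beta(1+\mathcal{R})$, use~\eqref{bc2 switch} to express $c\Lambda(E)=-\i k\mathcal{T}_1/(\alpha\beta)$, then substitute into~\eqref{bc1 switch} and~\eqref{bc4 switch} to reduce everything to a single linear equation for $\mathcal{T}_1$. Solving it yields the stated formula for $\mathcal{T}_1(k)$, and back-substituting into $\mathcal{T}_2=\beta(1+\mathcal{R})$ produces $\mathcal{T}_2(k)$.

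For case (ii) the one genuinely new ingredient is that condition~\eqref{bc4 switch} forces all $\phi_j(0)$ to equal the common value $c:=\alpha(1+\mathcal{R})+\alpha\beta\mathcal{T}_1$. If this $c$ were nonzero, then $\Phi/c$ would solve the problem~\eqref{Problem}, contradicting the assumption $E\in\sigma_0$ (cf.\ Observation~\ref{Obs. sigma0}); hence $c=0$. This immediately gives $1+\mathcal{R}=-\beta\mathcal{T}_1$, and then~\eqref{bc3 switch} yields $\mathcal{T}_2=-\beta^2\mathcal{T}_1$. Eliminating $\sum_j\phi_j'(0)$ between~\eqref{bc1 switch} and~\eqref{bc2 switch} produces a single scalar relation among $\mathcal{R},\mathcal{T}_1,\mathcal{T}_2$ that carries no $\Lambda(E)$-dependence; substituting the two identities above into it reduces everything to a linear equation in $\mathcal{T}_1$ whose solution is $\mathcal{T}_1=-2/(\tfrac{1}{\beta}+\beta+\beta^3)$, and then $\mathcal{T}_2=2\beta^2/(\tfrac{1}{\beta}+\beta+\beta^3)$, as announced.

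The main obstacle is purely algebraic bookkeeping: a $4\times4$ system depending on the two parameters $\alpha,\beta$ and on the function $\Lambda(E)$. There is no genuinely new conceptual difficulty beyond Proposition~\ref{Prop. T}; the essential conceptual step is the identification, in case~(ii), of the common Dirichlet value $c$ in~\eqref{bc4 switch} with the trace of a would-be solution of~\eqref{Problem}, which (by Observation~\ref{Obs. sigma0}) must vanish and thereby decouples the system from $\Lambda(E)$.
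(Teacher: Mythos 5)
Your proposal is correct and follows exactly the route the paper intends: the paper itself gives no written proof beyond the remark that the amplitudes are obtained by solving the system~\eqref{bc_sub switch}, and your reduction (eliminating $\mathcal{T}_2$ via~\eqref{bc3 switch} and $c\Lambda(E)$ via~\eqref{bc2 switch} in case~(i), and forcing the common Dirichlet value $c=0$ via Observation~\ref{Obs. sigma0} in case~(ii)) reproduces the stated formulae; I have checked the algebra and it works out. The one genuinely nontrivial point — that in case~(ii) the relation obtained by eliminating $\sum_j\phi_j'(0)$ between~\eqref{bc1 switch} and~\eqref{bc2 switch} is $\Lambda$-free, so no information about the derivatives is needed — is correctly identified and handled.
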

\begin{corollary}
For every $k>0$,
$$
\lim_{\alpha\to\infty}\P_1(k)=
\left\{\begin{array}{cl}
0 & \text{if $E=\frac{\hbar^2k^2}{2m}\in\sigma(H_{\Gamma})\backslash\sigma_0$}\,, \\
\frac{4}{\left(\frac{1}{\beta}+\beta+\beta^3\right)^2}
 & \text{otherwise}\,.
\end{array}\right.
$$
$$
\lim_{\alpha\to\infty}\P_2(k)=
\left\{\begin{array}{cl}
\frac{4}{\left(\frac{1}{\beta}+\beta\right)^2} & \text{if $E=\frac{\hbar^2k^2}{2m}\in\sigma(H_{\Gamma})\backslash\sigma_0$}\,, \\
\frac{4\beta^4}{\left(\frac{1}{\beta}+\beta+\beta^3\right)^2}
 & \text{otherwise}\,.
\end{array}\right.
$$
\end{corollary}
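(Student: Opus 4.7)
The plan is to obtain both limits by directly specializing the formulas of Proposition~\ref{Prop. T switch} to the three disjoint cases of the trichotomy
\[
(0,+\infty) = \sigma_0 \,\cup\, \bigl(\sigma(H_\Gamma)\setminus\sigma_0\bigr) \,\cup\, \bigl((0,+\infty)\setminus\sigma(H_\Gamma)\bigr)\,,
\]
using Observation~\ref{Obs. Lambda=0} to control $\Lambda(E)$ in the last two cases. Notice that the ``otherwise'' branch in the corollary lumps together two of these sub-cases, namely $E\in\sigma_0$ and $E\notin\sigma(H_\Gamma)$, so we must verify that these two sub-cases produce the same limit.

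First I would handle $E\in\sigma_0$, where Proposition~\ref{Prop. T switch}(ii) gives $\mathcal{T}_1$ and $\mathcal{T}_2$ as explicit constants independent of $\alpha$. Squaring moduli immediately yields $\P_1(k)=\frac{4}{(1/\beta+\beta+\beta^3)^2}$ and $\P_2(k)=\frac{4\beta^4}{(1/\beta+\beta+\beta^3)^2}$ with no limiting procedure needed. Next, for $E\in\sigma(H_\Gamma)\setminus\sigma_0$, Observation~\ref{Obs. Lambda=0} gives $\Lambda(E)=0$, so the formulas of Proposition~\ref{Prop. T switch}(i) collapse to $\mathcal{T}_1(k)=0$ and $\mathcal{T}_2(k)=\frac{2}{1/\beta+\beta}$, both independent of $\alpha$; hence $\P_1(k)=0$ and $\P_2(k)=\frac{4}{(1/\beta+\beta)^2}$ identically in $\alpha$.

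Finally, for $E\in(0,+\infty)\setminus\sigma(H_\Gamma)$, Observation~\ref{Obs. Lambda=0} guarantees $\Lambda(E)\ne 0$, so the $\alpha^2\Lambda(E)$ contributions dominate both the numerator and the denominator of \eqref{T switch} as $\alpha\to\infty$. Dividing top and bottom by $\alpha^2$ and passing to the limit, a short computation gives
\[
\lim_{\alpha\to\infty}\mathcal{T}_1(k)=\frac{-2}{\frac{1}{\beta}+\beta+\beta^3}\,,\qquad \lim_{\alpha\to\infty}\mathcal{T}_2(k)=\frac{2\beta^3}{1+\beta^2+\beta^4}\,,
\]
and the algebraic identity $\frac{1}{\beta}+\beta+\beta^3=\frac{1+\beta^2+\beta^4}{\beta}$ then produces exactly the same numerical values for $|\mathcal{T}_1|^2$ and $|\mathcal{T}_2|^2$ as in the $E\in\sigma_0$ sub-case. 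This consistency check between the two sub-cases that make up ``otherwise'' is the only mildly non-obvious step in the argument; everything else is plain algebra once Observation~\ref{Obs. Lambda=0} has supplied the right dichotomy $\Lambda(E)=0$ versus $\Lambda(E)\ne 0$.
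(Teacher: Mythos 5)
Your proof is correct and follows essentially the same route the paper intends: the paper only remarks that this corollary ``can be proven in a similar way as Corollary~\ref{T coro}'', i.e.\ by combining the formulas of Proposition~\ref{Prop. T switch} with the dichotomy $\Lambda(E)=0$ versus $\Lambda(E)\neq0$ supplied by Observation~\ref{Obs. Lambda=0}, which is exactly your case split. Your explicit check that the $E\in\sigma_0$ and $E\notin\sigma(H_\Gamma)$ sub-cases yield the same limiting values is a worthwhile detail the paper leaves implicit.
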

The corollary above can be proven in a similar way as Corollary~\ref{T coro}. Now let us define
$$
P_\beta:=\frac{4}{\left(\frac{1}{\beta}+\beta\right)^2}\,.
$$
We observe that if
\begin{equation}\label{beta}
\beta\in[1/4,2/3]\,,
\end{equation}
then $\beta^4$ is small with respect to $1$, hence
\begin{align*}
E\in\sigma(H_{\Gamma})\backslash\sigma_0 \ &\Rightarrow\  \lim_{\alpha\to\infty}\P_1(k)=0\ll\lim_{\alpha\to\infty}\P_2(k)\approx P_\beta\,, \\
E\in\rho(H_{\Gamma})\cup\sigma_0 \ &\Rightarrow\  \lim_{\alpha\to\infty}\P_1(k)= P_\beta\gg\lim_{\alpha\to\infty}\P_2(k)\,,
\end{align*}
and at the same time $\P_\beta$ is high enough to be easily observed ($\P_\beta>0.2$).
To sum up, if we choose $\beta$ according to~\eqref{beta} and $\alpha$ such that $\alpha\gg1$, the device depicted in Figure~\ref{Fig: switch} works as a spectral separator. If the particle energy $E$ neither belongs to $\sigma(H_{\Gamma})$ nor is close to a certain $\lambda\in\sigma(H_{\Gamma})$, the particle is transmitted to the output $\mathbf{o}_1$. If $E\in\sigma(H_{\Gamma})\backslash\sigma_0$ or $E$ is close to a certain $E\in\sigma(H_{\Gamma})\backslash\sigma_0$, the particle is redirected to the output $\mathbf{o}_2$.
Figure~\ref{Fig: switch 3} illustrates its function for $\beta=1/3$.
\begin{figure}[h]
\begin{center}
\includegraphics[width=6.5cm]{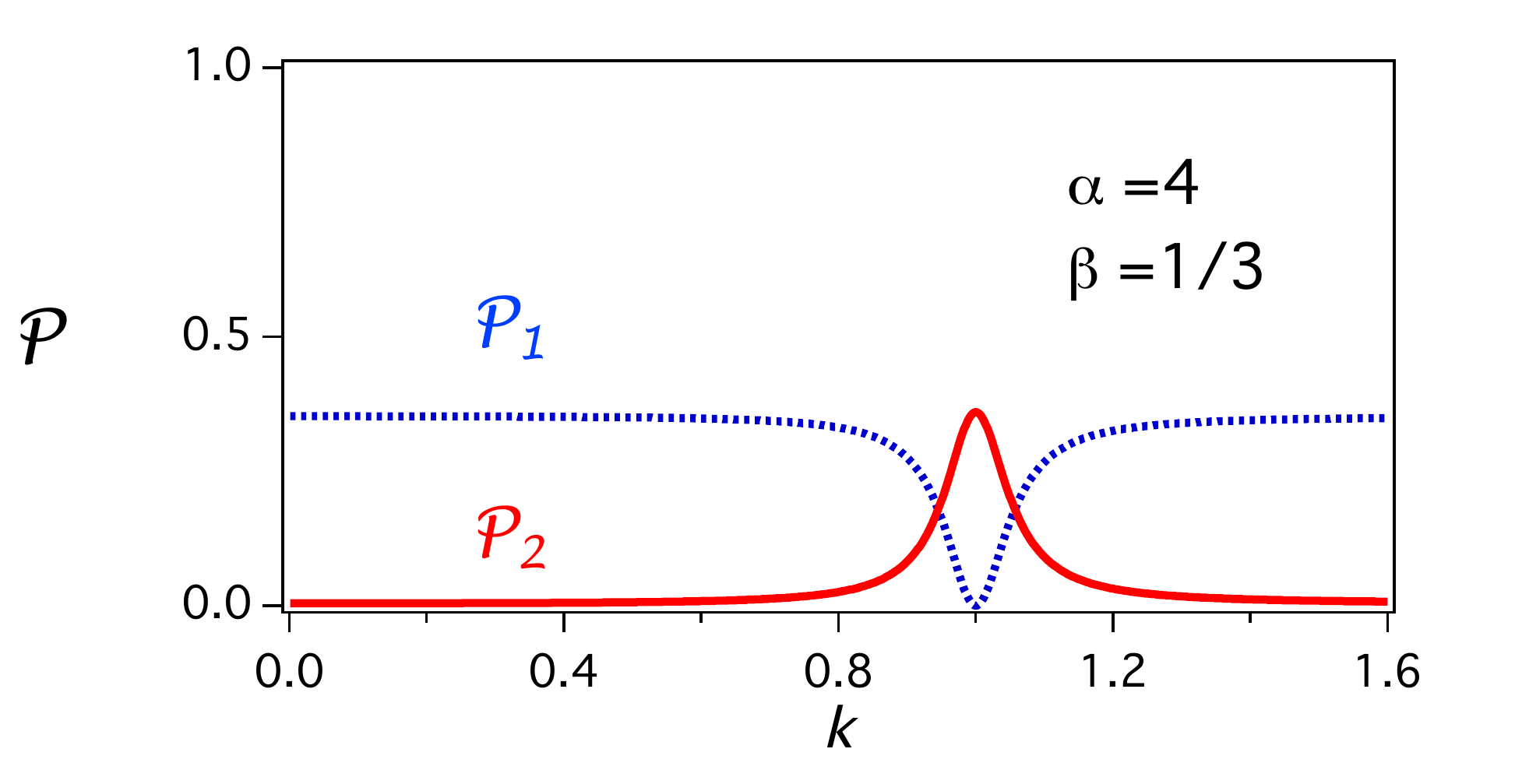}
\caption{An example of transmission characteristics of the separator depicted in Figure~\ref{Fig: switch}. The parameters of the device are chosen as $\beta=1/3$ and $\alpha=4$.}
\label{Fig: switch 3}
\end{center}
\end{figure}

In case that the spectrum of $\Gamma$ is governed by an external field, such as in the case of a loop considered in Section~\ref{Section: Loop}, the device works as a controllable junction which enables to send-out a near-monochromatic pulse of specified spectrum. Finally, if the energy of the incoming particles is fixed, the graph serves as a switch that can turn on and off the flux to a given output line.

\begin{remark}
Following the procedure from paper~\cite{TC12}, one can obtain an approximation of the scale-invariant vertex couplings used in Sections~\ref{Section: Inverse} and \ref{Section: Switch}, similarly as we did in Section~\ref{Section: Approximation}.
\end{remark}

\section{Conclusion}\label{Section: Conclusion}

The key idea 
for achieving strong resonance peaks at desired energies, described in this paper,
consists in attaching a quantum graph $\Gamma$ with convenient spectral properties to the input--output line via a special scale-invariant vertex coupling. The scale-invariant coupling causes the transmission probability along the input--output line, as a function of the particle energy, to resonate at the eigenenergies of the attached graph $\Gamma$. Consequently, the system works as a band-pass filter with 
narrow
passbands located around energies $E\in\sigma(H_{\Gamma})$. It can be also regarded as a spectral analyzer, a device that maps out the spectra of $\Gamma$ through the elastic scattering of a particle off $\Gamma$ with variable incoming energy.

A technically simple concept of controllability is naturally inhered in the model. 
Positions of peaks in transmission characteristics are
controllable by any mode that allows a variation of the spectrum of $\Gamma$. The practically most convenient way is to expose $\Gamma$ to an external field. When the strength of the field is being adjusted, the spectrum of $\Gamma$ is varying, and the 
resonance peaks
are changing their positions accordingly. An example of implementation has been discussed in Section~\ref{Section: Loop}, where $\Gamma$ being a loop in a magnetic field has been considered.

Scale-invariant vertex couplings proved useful already in a previous related work~\cite{TC11,TC12}. In this paper, we applied three different types of these couplings to design three different types of devices: a band-pass filter, a band-stop filter, and a spectral separator. It becomes increasingly evident that scale-invariant vertex couplings can serve as a core component of many simple quantum systems with various scattering properties.

Effects in quantum systems are often experimentally studied using classical waves~\cite{SS90,St99}. For instance,
the behavior of wave functions in quantum graphs is analogical to the behavior of waves in microwave networks~\cite{HBPSZS04}. Therefore, possible applications of our result are not limited to quantum mechanics. The spectral filtering effect could be observed also in various classical systems, such as in optical fibre networks, waveguides and optical laser systems.

\begin{acknowledgments}
The authors thank Pavel Exner for helpful comments.
This research was supported by the Japan Ministry of Education, Culture, Sports, Science and Technology under the Grant number 24540412.
\end{acknowledgments}

\medskip

\end{document}